	\pgfplotsset{compat=1.12}
	\crefname{equation}{}{}
\theoremstyle{plain}
	\newtheorem{theorem}{Theorem}[section]
	\newtheorem{lemma}[theorem]{Lemma}
	\newtheorem{proposition}[theorem]{Proposition}
	\newtheorem{corollary}[theorem]{Corollary}
\theoremstyle{definition}
	\newtheorem{remark}[theorem]{Remark}
	\newtheorem{example}[theorem]{Example}
\newcommand{\N}{\mathbb{N}}
\newcommand{\Z}{\mathbb{Z}}
\newcommand{\R}{\mathbb{R}}
\newcommand{\C}{\mathbb{C}}
\newcommand{\T}{\mathbb{T}}
\newcommand{\cB}{\mathcal{B}}
\newcommand{\cH}{\mathcal{H}}
\newcommand{\sign}{\mathrm{sign}\,}
\renewcommand{\Re}{\mathrm{Re}\,}
\renewcommand{\Im}{\mathrm{Im}\,}
\newcommand{\Tr}{\mathrm{Trace}\,}
\newcommand{\ind}{\mathrm{ind}\,}
\newcommand{\dis}{\sigma_{\mathrm{dis}}}
\newcommand{\ess}{\sigma_{\mathrm{ess}}}
\newcommand{\Arg}{\mathrm{Arg}\,}
\newcommand{\Usuz}{U_{\textnormal{suz}}}
\newcommand{\textbi}[1]{\textit{\textbf{#1}}}
\newcommand\danger[1]{%
 \makebox[1em][c]{%
 \makebox[0pt][c]{\raisebox{.3em}{\textcolor{red}{\tiny \textbf{#1}}}}%
 \makebox[0pt][c]{\textcolor{red}{\fontsize{35}{45}\selectfont $\bigtriangleup$}}}}%
\newcounter{MyCounter}
\renewcommand\theMyCounter{\arabic{MyCounter}}
\newcommand{\dangernote}{\refstepcounter{MyCounter}\textsuperscript{\theMyCounter} \marginpar{\,\,\,\,\,\,{\footnotesize \danger{\theMyCounter}}}}
\begin{document}

\begin{frontmatter}

\title
{
The bulk-edge correspondence for the split-step quantum walk on the one-dimensional integer lattice
}

\author[Shinshu1]{Yasumichi Matsuzawa}
	\ead{myasu@shinshu-u.ac.jp}
\author[Hokkaido]{Motoki Seki}
	\ead{seki@math.sci.hokudai.ac.jp}
\author[Shinshu2]{Yohei Tanaka\corref{corresponding}}
	\ead{20hs602a@shinshu-u.ac.jp}

\cortext[corresponding]{Corresponding author}

\address[Shinshu1]{Department of Mathematics, Faculty of Education, Shinshu University, 6-Ro, Nishi-nagano,
Nagano 380-8544, Japan}
\address[Shinshu2]{Division of Mathematics and Physics, Faculty of Engineering, Shinshu University, Wakasato, Nagano 380-8553, Japan}
\address[Hokkaido]{Department of Mathematics, Faculty of Science, Hokkaido University, Kita 10, Nishi 8, Kita-Ku, Sapporo, Hokkaido, 060-0810, Japan}

\begin{abstract}
Suzuki's split-step quantum walk on the one-dimensional integer lattice can be naturally viewed as a chirally symmetric quantum walk. Given the unitary time-evolution of such a chirally symmetric quantum walk, we can separately introduce well-defined indices for the eigenvalues $\pm 1.$ The bulk-edge correspondence for Suzuki's split-step quantum walk is twofold. Firstly, we show that the multiplicities of the eigenvalues $\pm 1$ coincide with the absolute values of the associated indices. Note that this can be viewed as the symmetry protection of bound states, and that the indices we consider are robust in the sense that these depend only on the asymptotic behaviour of the parameters of the given model. Secondly, we show that that such bound states exhibit exponential decay at spatial infinity. 
\end{abstract}

\begin{keyword}
Chiral symmetry \sep Bulk-edge correspondence \sep Split-step quantum walk \sep Symmetry protection of bound states
\end{keyword}
\end{frontmatter}


\section{Introduction}

Quantum walk theory is widely recognised as a natural quantum-mechanical counterpart of the classical random walk theory \cite{Gudder-1988,Aharonov-Davidovich-Zagury-1993,Meyer-1996,Ambainis-Bach-Nayak-Vishwanath-Watrous-2001}. In this paper, we shall focus on the well-known discrete-time quantum, known as the \textbi{split-step quantum walk} on the one-dimensional integer lattice $\Z.$ This is a two-state quantum walk model, and so  $\ell^2(\Z, \C^2) = \ell^2(\Z) \oplus \ell^2(\Z)$ is the underlying state Hilbert space. The primitive form of the split-step quantum walk discussed in \cite{Kitagawa-Rudner-Berg-Demler-2010, Kitagawa-Broome-Fedrizzi-Rudner-Berg-Kassal-Aspuru-Demler-White-2012, Kitagawa-2012} can be characterised by the following unitary time-evolution;
\begin{equation}
\label{equation: kitagawa split-step quantum walk}
U_{\textnormal{kit}} := 
\begin{pmatrix}
1 & 0 \\
0 & L
\end{pmatrix}
\begin{pmatrix}
\cos \theta_2 & - \sin \theta_2 \\
\sin \theta_2 & \cos \theta_2
\end{pmatrix}
\begin{pmatrix}
L^* & 0 \\
0 & 1
\end{pmatrix}
\begin{pmatrix}
\cos \theta_1 & - \sin \theta_1 \\
\sin \theta_1 & \cos \theta_1
\end{pmatrix},
\end{equation}
where $L$ is the bilateral left-shift operator on $\ell^2(\Z),$ and where $\theta_j = (\theta_j(x))_{x \in \Z}$ is a $\R$-valued sequence for each $j = 1,2.$ We refer to \cref{equation: kitagawa split-step quantum walk} as the time-evolution operator of \textbi{Kitagawa's split-step quantum walk} throughout this paper, where each trigonometric sequence is viewed as a bounded multiplication operator on $\ell^2(\Z).$ It turns out that the evolution operator of Kitagawa's split-step quantum walk can be naturally generalised to that of \textbi{Suzuki's split-step quantum walk} \cite{Fuda-Funakawa-Suzuki-2017,Fuda-Funakawa-Suzuki-2018,Fuda-Funakawa-Suzuki-2019,Tanaka-2020,Narimatsu-Ohno-Wada-2021}, given explicitly by the following formula;
\begin{equation}
\label{equation: suzuki split-step quantum walk}
U_{\textnormal{suz}} := 
\begin{pmatrix}
p & q L \\
L^*q^*  & -p(\cdot - 1)
\end{pmatrix}
\begin{pmatrix}
a & b^* \\
b & -a
\end{pmatrix}, \qquad 
\varGamma := 
\begin{pmatrix}
p & q L \\
L^*q^*  & -p(\cdot - 1)
\end{pmatrix},
\end{equation}
where we always assume that $\R$-valued sequences $p = (p(x))_{x \in \Z}, a = (a(x))_{x \in \Z}$ and $\C$-valued sequences $q = (q(x))_{x \in \Z}, b = (b(x))_{x \in \Z}$ satisfy $p(x)^2 + |q(x)|^2 = 1$ and $a(x)^2 + |b(x)|^2 = 1$ for each $x \in \Z.$ Indeed, it is not difficult to show that \cref{equation: kitagawa split-step quantum walk} can be made unitarily equivalent to \cref{equation: suzuki split-step quantum walk}, provided that we appropriately define $p,q,a,b$ in terms of $\theta_1, \theta_2$ (see \cref{lemma: equivalence of Ukit} for details).

A notable advantage of using Suzuki's split-step quantum walk lies in the fact that the associated time-evolution operator $U = U_{\textnormal{suz}}$ exhibits chiral symmetry with respect to the unitary self-adjoint operator $\varGamma$ defined by the second equality in \cref{equation: suzuki split-step quantum walk};
\begin{equation}
\label{equation: chiral symmetry}
U^* = \varGamma U \varGamma.
\end{equation}
In fact, given any pair $(\varGamma, U)$ of an abstract unitary operator $U$ and an abstract unitary self-adjoint operator $\varGamma$ satisfying \cref{equation: chiral symmetry}, we can introduce the following two indices according to \cite{Cedzich-Geib-Grunbaum-Stahl-Velazquez-Werner-Werner-2018,Cedzich-Geib-Stahl-Velazquez-Werner-Werner-2018,Cedzich-Geib-Werner-Werner-2021}:
\begin{equation}
\label{equation: definition of symmetry indices}
\ind_+(\varGamma, U) := \Tr(\varGamma|_{\ker(U - 1)}), \qquad 
\ind_-(\varGamma, U) := \Tr(\varGamma|_{\ker(U + 1)}),
\end{equation}
where $\ker(U -1)$ and $\ker(U + 1)$ are $\varGamma$-invariant subspaces by \cref{equation: chiral symmetry}. Note that $\ind_+(\varGamma, U)$ is well-defined, if the essential spectrum of $U,$ defined by $\ess(U) := \{z \in \C \mid U - z \mbox{ is not Fredholm}\},$ does not contain $1.$ Indeed, the restriction $\varGamma|_{\ker(U - 1)}$ becomes a unitary self-adjoint operator on the finite-dimensional vector space $\ker(U - 1)$ in this case. Similarly, $\ind_-(\varGamma, U)$ is well-defined, if $-1 \notin \ess(U).$ The following \textit{bulk-edge correspondence} is one of the main results of the present article;

\begin{theorem}
\label{theorem: baby bulk-edge correspondence}
Let $U = \Usuz$ be the evolution operator of Suzuki's split-step quantum walk given by \cref{equation: suzuki split-step quantum walk}, and let us assume the existence of the following two-sided limits:
\begin{align}
\label{equation: anisotropic assumption}
&p(\pm \infty) := \lim_{x \to \pm \infty} p(x)  \in (-1,1), & &a(\pm \infty) := \lim_{x \to \pm \infty} a(x)  \in (-1,1).
\end{align}
Let $|p(x)| < 1$ and $|a(x)| < 1$ for each $x \in \Z.$ Then $p(\star) \neq  \pm a(\star)$ for each $\star  = \pm \infty$ if and only if $\pm 1 \notin \ess(U).$ In this case, the following two assertions hold true:
\begin{enumerate}[(i)]
\item We have $\dim \ker(U \mp 1) = |\ind_\pm(\varGamma,U)|,$ where
\begin{equation}
\label{equation: pm indices for anisotropic ssqw}
\ind_\pm(\varGamma, U) = 
\begin{cases}
+1, & p(-\infty) \mp a(-\infty) < 0 < p(+\infty) \mp a(+\infty), \\
-1, & p(+\infty) \mp a(+\infty) < 0 < p(-\infty) \mp a(-\infty), \\
0, & \mbox{otherwise}.
\end{cases}
\end{equation}

\item If $(-1)^j(p(-\infty) \mp a(-\infty)) < 0 < (-1)^j(p(+\infty) \mp a(+\infty))$ for some $j=1,2,$ then any non-zero vector $\Psi \in \ker(U \mp 1)$ admits the following unique representation;
\begin{equation}
\label{equation: eigenstate for the anosotropic case}
\Psi =
\begin{pmatrix}
\frac{a \mp (-1)^j}{b} \psi \\
\psi
\end{pmatrix}, \qquad \psi \in \ker\left(L + \frac{p + (-1)^j}{q}\frac{a \mp (-1)^j}{b}\right).
\end{equation}
Moreover, the eigenstate $\Psi$ characterised by \cref{equation: eigenstate for the anosotropic case} exhibits exponential decay. More precisely, there exist positive constants $c^\downarrow_\pm, c^\uparrow_\pm, \kappa^\downarrow_\pm, \kappa^\uparrow_\pm, x_\pm,$ such that
\begin{equation}
\label{equation: exponential decay in the anisotropic case}
\kappa^\downarrow_\pm e^{- c^\downarrow_\pm |x|} \leq \|\Psi(x)\|^2 \leq \kappa^\uparrow_\pm e^{- c^\uparrow_\pm |x|}, \qquad |x| \geq x_\pm.
\end{equation}
\end{enumerate}
\end{theorem}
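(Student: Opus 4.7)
The plan is to characterize $\ess(U)$ by a compact-perturbation reduction to translation-invariant ``asymptotic'' operators, and then to reduce the eigenequation $U\Psi = \pm\Psi$ to a scalar first-order recurrence on $\Z$ via the chiral factorization $U = \varGamma C$, where $C := \begin{pmatrix} a & b^* \\ b & -a \end{pmatrix}$ is the pointwise unitary involution.

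For the first assertion, I would observe that since $p$ and $a$ converge at $\pm\infty$, on each half-line the operator $U$ differs from the translation-invariant operator $U_\pm$ (obtained by freezing the coefficients at $p_\pm, a_\pm$, with any consistent choice of $q_\pm, b_\pm$) by a compact perturbation. A half-line decomposition plus Weyl's theorem then gives $\ess(U) = \ess(U_+) \cup \ess(U_-)$. The Fourier transform diagonalizes each $U_\pm$ into a $2\times 2$ unitary matrix-valued function $U_\pm(k)$ with $\det U_\pm(k) = 1$ and $\Tr U_\pm(k) = 2 p_\pm a_\pm + 2\Re(q_\pm b_\pm e^{ik})$. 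Hence $\pm 1 \in \ess(U_\pm)$ iff $|1 \mp p_\pm a_\pm| \leq |q_\pm b_\pm|$, which upon squaring and applying $|q_\pm|^2 = 1 - p_\pm^2$ and $|b_\pm|^2 = 1 - a_\pm^2$ simplifies algebraically to $(p_\pm \mp a_\pm)^2 \leq 0$, i.e.\ $p_\pm = \pm a_\pm$.

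For (i) and (ii), I would exploit chiral symmetry: $\varGamma U \varGamma = U^*$ implies that $\varGamma$ preserves $\ker(U \mp 1)$, and since $\varGamma^2 = I$ the kernel splits orthogonally into the $\pm 1$-eigenspaces of $\varGamma|_{\ker(U \mp 1)}$. A vector $\Psi$ with $\varGamma\Psi = \delta\Psi$ ($\delta \in \{\pm 1\}$) lies in $\ker(U \mp 1)$ iff it satisfies the pointwise linear equation $C\Psi = \pm\delta\Psi$, which (using $|b|^2 = 1 - a^2$) uniquely determines $\Psi_1$ as a scalar multiple of $\psi := \Psi_2$; encoding the sign as $(-1)^j$ with $j \in \{1,2\}$ recovers the representation $\Psi_1 = \frac{a \mp (-1)^j}{b}\psi$ of \cref{equation: eigenstate for the anosotropic case}. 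Substituting this Ansatz into the first component of $\varGamma\Psi = \delta\Psi$ and simplifying with $|q|^2 = 1 - p^2$ produces exactly the scalar recurrence $(L + \alpha_j)\psi = 0$ with $\alpha_j := \frac{p + (-1)^j}{q}\cdot\frac{a \mp (-1)^j}{b}$; the second component equation is then automatic by the same normalizations.

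The recurrence $(L + \alpha_j)\psi = 0$ is first-order with nonvanishing coefficients, so it admits a unique solution on $\Z$ up to scalar, and the telescoping formula expresses $|\psi|$ as a partial product of $|\alpha_j|$. Hence $\psi \in \ell^2(\Z)$ precisely when the asymptotic moduli $|\alpha_j(+\infty)|$ and $|\alpha_j(-\infty)|$ lie on opposite sides of $1$, a condition which a direct manipulation identifies with the two-sided sign condition on $(-1)^j(p_\pm \mp a_\pm)$ of (ii). Since the conditions for $j = 1$ and $j = 2$ are mutually exclusive, $\dim \ker(U \mp 1) \leq 1$, and in the nontrivial case the kernel lies entirely in one $\varGamma$-eigenspace, yielding $|\ind_\pm(\varGamma, U)| = \dim \ker(U \mp 1) = 1$ with the sign recorded in \cref{equation: pm indices for anisotropic ssqw}. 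Finally, the essential-spectrum hypothesis forces $|\alpha_j(\pm\infty)|$ to be bounded strictly away from $1$, and the prefactor $1 + |\frac{a \mp (-1)^j}{b}|^2$ is bounded above and below by the convergence of $a$ together with $|a| < 1$, so the telescoping formula yields the two-sided exponential bounds of \cref{equation: exponential decay in the anisotropic case} for $\|\Psi(x)\|^2$. I expect the main technical obstacle to be the compact-perturbation step: since the shifts $L, L^*$ appearing in $U$ are not themselves compact, a careful half-line surgery is required to justify $\ess(U) = \ess(U_+) \cup \ess(U_-)$; once this is in hand, the remaining arguments are essentially algebraic bookkeeping driven by the chiral decomposition.
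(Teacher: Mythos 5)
Your proposal is correct, and for the kernel, index, and decay statements it follows essentially the same route as the paper: write $U = \varGamma\varGamma'$ with $\varGamma'$ the pointwise coin, observe that $\ker(U \mp 1)$ splits into $\varGamma$-eigenspaces each of which is an intersection $\ker(\varGamma - \delta) \cap \ker(\varGamma' \mp \delta)$, eliminate the first component to obtain the scalar first-order recurrence $(L + \alpha_j)\psi = 0$ with $\alpha_j = \frac{p + (-1)^j}{q}\frac{a \mp (-1)^j}{b}$, and then settle square-summability by the ratio test and the decay bounds by telescoping; the identity $|\alpha_1(x)|^2|\alpha_2(x)|^2 = 1$ gives the mutual exclusivity of the two branches and hence $\dim\ker(U \mp 1) \le 1$ with $|\ind_\pm| = \dim\ker(U\mp 1)$. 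The one place you genuinely diverge is the characterisation of $\ess(U)$: the paper simply imports the formula $\ess(U) = \bigcup_{\star}\{z \in \T \mid \Re z \in I(\star)\}$ from an earlier reference and then does the trigonometric reduction, whereas you propose to re-derive it by freezing the coefficients at $\pm\infty$, performing a half-line decomposition, and invoking invariance of the Fredholm essential spectrum under compact perturbation. Your symbol computation ($\det U_\pm(k) = 1$, $\Tr U_\pm(k) = 2p_\pm a_\pm + 2\Re(q_\pm b_\pm e^{ik})$) and the algebraic reduction of $\pm 1 \in \ess(U_\pm)$ to $(p_\pm \mp a_\pm)^2 \le 0$ are correct, but the surgery step is only sketched: one must check both that the half-line truncation is a finite-rank perturbation and that the essential spectrum of each half-line (Toeplitz-type) piece coincides with the range of the symbol, neither of which is automatic from Weyl-type arguments alone. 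You flag this yourself as the main obstacle; filling it in (or citing the two-phase essential-spectrum result, as the paper does) is what remains to make the proof complete.
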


Note first the equality $\dim \ker(U \mp 1) = |\ind_\pm(\varGamma, U)|$ in \cref{theorem: baby bulk-edge correspondence}(i) can be understood as the protection of eigenstates corresponding to $\pm 1$ by chiral symmetry. Here, the robustness of $\ind_\pm(\varGamma, U)$ is ensured by the formula \cref{equation: pm indices for anisotropic ssqw} which depends only on the asymptotic values \cref{equation: anisotropic assumption}. It is also shown in \cref{theorem: baby bulk-edge correspondence}(ii) that such symmetry protected eigenstates can be uniquely characterised by the explicit formula \cref{equation: eigenstate for the anosotropic case}, and that they exhibit exponential decay in the sense of \cref{equation: exponential decay in the anisotropic case}.

\cref{theorem: baby bulk-edge correspondence} can be classified as an index theorem for $2$-phase chirally symmetric quantum walks on the one-dimensional integer lattice $\Z,$ since we assume the existence of the two-sided limits as in \cref{equation: anisotropic assumption}. Index theory of such $2$-phase quantum walks can be found in the extensive literature \cite{Cedzich-Grunbaum-Stahl-Velazquez-Werner-Werner-2016,Cedzich-Geib-Grunbaum-Stahl-Velazquez-Werner-Werner-2018,Cedzich-Geib-Stahl-Velazquez-Werner-Werner-2018,Suzuki-2019,Suzuki-Tanaka-2019, Matsuzawa-2020,Asahara-Funakawa-Seki-Tanaka-2020, Tanaka-2020, Cedzich-Geib-Werner-Werner-2021}. As such, \cref{theorem: baby bulk-edge correspondence} may not seem novel at first glance.

Note, however, that the ultimate purpose of the present article is to generalise both \cref{theorem: baby bulk-edge correspondence}(i),(ii) by replacing the $2$-phase assumption \cref{equation: anisotropic assumption} with the following significantly weakened assumption;
\begin{align}
\label{equation: bounded assumption}
&\sup_{x \in \Z} |p(x)| < 1, \qquad \sup_{x \in \Z} |a(x)| < 1.
\end{align}
For example, the new assumption \cref{equation: bounded assumption} will allow us to consider the case where $p$ and $a$ are periodic. Unlike the $2$-phase case, it seems unrealistic to extract any useful information about $\ess(\Usuz)$ from the new abstract assumption \cref{equation: bounded assumption}. As such, it is desirable to also generalise \cref{equation: definition of symmetry indices}. 

The present article is organised as follows. In \cref{section: preliminaries} we develop new index theory for abstract unitary operators $U$ exhibiting chiral symmetry in the sense of \cref{equation: chiral symmetry} in full generality. Note that this somewhat elementary construction is beyond the scope of the existing literature \cite{Cedzich-Geib-Grunbaum-Stahl-Velazquez-Werner-Werner-2018,Cedzich-Geib-Stahl-Velazquez-Werner-Werner-2018,Suzuki-2019,Tanaka-2020,Cedzich-Geib-Werner-Werner-2021}, since it makes use of neither the notion of a Fredholm operator, nor any local structure of the underlying Hilbert space. We show that our indices coincide with \cref{equation: definition of symmetry indices}, if the essential spectrum of $U$ has a spectral gap at $\pm 1.$ The purpose of \cref{section: bulk-edge correspondence} is to prove \cref{theorem: baby bulk-edge correspondence} with \cref{equation: anisotropic assumption} replaced by \cref{equation: bounded assumption} (see \cref{theorem: bulk-boundary correspondence} for more details). As we shall see, \cref{theorem: baby bulk-edge correspondence}(ii) is a natural extension of \cite[Theorem 5.1]{Fuda-Funakawa-Suzuki-2018} which states that non-trivial vectors in the so-called birth eigenspaces exhibit exponential decay. However, we do not make use of the spectral mapping theorem for chirally symmetric unitary operators discussed in \cite{Segawa-Suzuki-2016,Segawa-Suzuki-2019} unlike \cite{Fuda-Funakawa-Suzuki-2018}. The present article concludes with the summary and discussion in \cref{section: concluding remarks}. For example, it is shown in this section that the decay rates $c^\downarrow_\pm, c^\uparrow_\pm$ in \cref{equation: exponential decay in the anisotropic case} depend on the gaps of the essential spectrum under the $2$-phase assumption \cref{equation: anisotropic assumption}, and that the index formulas in \cite{Suzuki-Tanaka-2019, Matsuzawa-2020,Asahara-Funakawa-Seki-Tanaka-2020, Tanaka-2020} can be easily derived from \cref{theorem: baby bulk-edge correspondence}(i).

\section{Indices for chirally symmetric unitary operators}
\label{section: preliminaries}

By operators we always mean everywhere-defined bounded linear operators between Banach spaces throughout this paper. Recall that the \textbi{(Fredholm) essential spectrum} of an operator $X$ on a Hilbert space $\cH$ is defined by $\ess(X) := \{z \in \C \mid X - z \mbox{ is not Fredholm}\}.$ If $X$ is normal, then $\ess(X) = \sigma(X) \setminus \dis(X),$ where $\dis(X)$ is the discrete spectrum of $X.$ Note that the equality $\ker X = \ker X^*X$ shall be repeatedly used without any further comment.

A \textbi{chiral pair} on $\cH$ is any pair $(\varGamma,U)$ of a unitary self-adjoint operator $\varGamma : \cH \to \cH$ and an operator $U: \cH \to \cH,$  satisfying the chiral symmetry condition \cref{equation: chiral symmetry}. Note that the underlying Hilbert space $\cH$ admits a $\Z_2$-grading of the form $\cH = \ker(\varGamma - 1) \oplus \ker(\varGamma + 1),$ and that $\varGamma = 1 \oplus (-1)$ with respect to this orthogonal decomposition, where $1$ denotes the identity operator on a Hilbert space throughout this paper. The operator $U$ can then be written as $U = R + iQ,$ where $R, Q$ are the real and imaginary parts of $U$ respectively. We have: 
\begin{align}
\label{equation: representation of R and Q}
R =  
\begin{pmatrix}
R_1 & 0 \\
0 & R_2
\end{pmatrix}_{\ker(\varGamma - 1) \oplus \ker(\varGamma + 1)}, \qquad 
Q = \begin{pmatrix}
0 & Q_2 \\
Q_1 & 0
\end{pmatrix}_{\ker(\varGamma - 1) \oplus \ker(\varGamma + 1)}.
\end{align}
Here, the first equality follows from the commutation relation $[\varGamma, R] := \varGamma R - R \varGamma = 0,$ whereas the second equality follows from the anti-commutation relation $\{\varGamma, Q\} := \varGamma Q + Q \varGamma = 0$ (see \cite[Lemma 2.2]{Suzuki-2019} for details). Since $R, Q$ are self-adjoint, we have $R_j^* = R_j$ for each $j=1,2,$ and $Q_2 = Q_1^*.$ The following formula shall be referred to as the \textbi{standard representation} of $U$ with respect to $\varGamma$ throughout this paper;
\begin{equation}
\label{equation: standard representation of U} 
U = 
\begin{pmatrix}
R_1 & iQ_2 \\
iQ_1 & R_2
\end{pmatrix}_{\ker(\varGamma - 1) \oplus \ker(\varGamma + 1)}.
\end{equation}
With \cref{equation: standard representation of U} in mind, we introduce the following formal indices:
\begin{align}
\label{equation: definition of pm indices}
\ind_\pm(\varGamma, U) &:= \dim \ker (R_1 \mp 1) - \dim \ker (R_2 \mp 1), \\
\ind(\varGamma, U) &:= \dim \ker Q_1 - \dim \ker Q_2.
\end{align}

\begin{lemma}
Given a chiral pair $(\varGamma, U)$ with \cref{equation: standard representation of U} being the standard representation of $U,$ we have 
\begin{align}
\label{equation: ker U and Rj}
\ker(U \mp 1) &= \ker(R_1 \mp 1) \oplus \ker(R_2 \mp 1), \\
\label{Equation: Rj and Qj}
\ker Q_j &= \ker(R_j - 1) \oplus \ker(R_j + 1), \qquad j=1,2.
\end{align}
Moreover, the following assertions hold true:
\begin{enumerate}[(i)]
\item The index $\ind_\pm(\varGamma,U)$ is a well-defined integer, if $\dim \ker(U \mp 1) < \infty.$ In this case, 
\begin{equation}
\label{equation: topological protection of bounded states}
|\ind_\pm(\varGamma,U)| \leq \dim \ker(U \mp 1).
\end{equation}
\item The index $\ind(\varGamma,U)$ is a well-defined integer, if $\dim \ker(U - 1) \oplus \ker(U + 1) < \infty.$ In this case, 
\begin{equation}
\label{Equation: Witten Index is Sum of pm Indies}
\ind(\varGamma,U) = \ind_+(\varGamma,U) + \ind_-(\varGamma,U).
\end{equation}
\end{enumerate}
\end{lemma}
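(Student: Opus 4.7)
The plan is to first extract the algebraic identities that follow from unitarity of $U = R + iQ$, then combine them with chiral symmetry to prove the two kernel equalities, and finally obtain (i) and (ii) as dimension counts.

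Computing $U^*U$ and $UU^*$ and using that $U$ is unitary while $R, Q$ are self-adjoint yields $R^2 + Q^2 = 1$ and $[R, Q] = 0$. Reading this in the block form of \cref{equation: representation of R and Q}, with $Q_2 = Q_1^*$, the first identity becomes $R_1^2 + Q_1^* Q_1 = 1$ on $\ker(\varGamma - 1)$ and $R_2^2 + Q_1 Q_1^* = 1$ on $\ker(\varGamma + 1)$; the commutator identity will not be needed. For \cref{Equation: Rj and Qj}, I use $\ker Q_j = \ker Q_j^* Q_j = \ker(1 - R_j^2)$. Since $R_j$ is self-adjoint, the factorisation $1 - R_j^2 = (1 - R_j)(1 + R_j)$ gives the orthogonal decomposition $\ker(R_j - 1) \oplus \ker(R_j + 1)$.

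The inclusion $\supset$ in \cref{equation: ker U and Rj} is then immediate: if $R_j \psi_j = \pm \psi_j$, then $\psi_j \in \ker Q_j$ by the previous step, and substituting $(\psi_1, 0)$ or $(0, \psi_2)$ into \cref{equation: standard representation of U} yields $U \Psi = \pm \Psi$. The reverse inclusion is the main point and is where chiral symmetry enters. From $\varGamma U \varGamma = U^*$ together with the general identity $\ker(U^* \mp 1) = \ker(U \mp 1)$ valid for any unitary $U$, one obtains that $\ker(U \mp 1)$ is $\varGamma$-invariant. Consequently it decomposes along the grading $\cH = \ker(\varGamma - 1) \oplus \ker(\varGamma + 1)$, meaning that any $\Psi \in \ker(U \mp 1)$ is the sum of vectors of the form $(\psi_1, 0)$ and $(0, \psi_2)$, each itself in $\ker(U \mp 1)$. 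Substituting these back into \cref{equation: standard representation of U} returns $R_j \psi_j = \pm \psi_j$, which is the inclusion $\subset$.

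Assertion (i) is then a direct count: \cref{equation: ker U and Rj} exhibits $\ker(U \mp 1)$ as the orthogonal sum of the two finite-dimensional spaces whose signed dimensions define $\ind_\pm(\varGamma, U)$, so the latter is a well-defined integer satisfying \cref{equation: topological protection of bounded states}. For (ii), \cref{Equation: Rj and Qj} gives $\dim \ker Q_j = \dim \ker(R_j - 1) + \dim \ker(R_j + 1)$, where all four summands are finite by \cref{equation: ker U and Rj} applied to the hypothesis; regrouping the four dimensions produces \cref{Equation: Witten Index is Sum of pm Indies}. The only non-routine ingredient in the whole argument is the $\varGamma$-invariance of $\ker(U \mp 1)$, which is where chiral symmetry plays its role.
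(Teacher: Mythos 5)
Your proof is correct and complete; the derivation of \cref{Equation: Rj and Qj} from $R_j^2+Q_j^*Q_j=1$ and the dimension counts in (i)--(ii) are exactly the paper's. Where you genuinely diverge is in establishing \cref{equation: ker U and Rj}. The paper obtains it in one stroke from the algebraic identity $(U\mp1)^*(U\mp1)=2(1\mp R)$, which gives $\ker(U\mp1)=\ker(R\mp1)$ for any unitary $U$ and then splits along the block-diagonal form $R=R_1\oplus R_2$; chiral symmetry enters there only through the block structure \cref{equation: representation of R and Q}. You instead argue by double inclusion: the nontrivial direction uses the $\varGamma$-invariance of $\ker(U\mp1)$ (from $\varGamma U\varGamma=U^*$ and $\ker(U^*\mp1)=\ker(U\mp1)$) to decompose an eigenvector along the grading via $\tfrac{1}{2}(1\pm\varGamma)\Psi$ and then reads off $R_j\psi_j=\pm\psi_j$ from the block equations, while the easy direction needs the separately verified inclusion $\ker(R_j\mp1)\subseteq\ker Q_j$ from \cref{Equation: Rj and Qj}. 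Your route makes the role of the symmetry more visible and is more elementary, at the cost of being longer; the paper's is slicker and isolates a statement ($\ker(U\mp1)=\ker(R\mp1)$) that holds for every unitary operator. Both arguments are sound; the only step you leave as terse as the paper does is that $\ker(1-R_j^2)=\ker(R_j-1)\oplus\ker(R_j+1)$, which one can justify via the projections $\tfrac{1}{2}(1\pm R_j)$ restricted to $\ker(1-R_j^2)$.
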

Note that $\dim \ker(U \mp 1) < \infty$ is a weaker assumption than $\pm 1 \notin \ess(U).$ 
\begin{proof}
Since $U = R + iQ$ is unitary and since $[R,Q] = 0,$ we have $R^2 + Q^2 = 1.$  Firstly, this matrix equality implies $R_j^2 + Q_j^*Q_j = 1$ for each $j = 1,2,$ and so \cref{Equation: Rj and Qj} follows. Secondly, the same equality implies $(U \mp 1)^*(U \mp 1) = 2(1 \mp R).$ We obtain \cref{equation: ker U and Rj} from
\begin{equation}
\label{equation: ker U and R}
\ker(U \mp 1) = \ker(U \mp 1)^*(U \mp 1) =  \ker(1 \mp R) = \ker(R \mp 1),
\end{equation}
where $\ker(R \mp 1) = \ker(R_1 \mp 1) \oplus \ker(R_2 \mp 1),$ since $R = R_1 \oplus R_2.$

(i) It follows from  \cref{equation: ker U and Rj} that if $\dim \ker(U \mp 1) < \infty,$ then $\dim \ker(R_j \mp 1)< \infty$ for each $j=1,2,$ and so $\ind_\pm(\varGamma,U)$ is well-defined. We have
\[
|\ind_\pm(\varGamma,U)|
\leq  \dim\ker(R_1 \mp 1) + \dim \ker(R_2 \mp 1)
= \dim \ker(U \mp 1).
\]

(ii) It follows from \cref{Equation: Rj and Qj} that
\begin{equation}
\label{equation: ker Qj and Rj}
\dim \ker Q_j = \dim \ker(R_j - 1) + \dim \ker(R_j + 1), \qquad j = 1,2.
\end{equation}
If $\dim \ker(U - 1) \oplus  \ker(U + 1) < \infty,$ then $\dim \ker(R_j - 1) \oplus \ker(R_j + 1) < \infty$ for each $j=1,2$ by \cref{equation: ker U and R}. It follows from \cref{equation: ker Qj and Rj} that
\begin{align*}
\ind(\varGamma, U)
&=  \dim \ker Q_1 - \dim \ker Q_2 \\
&=  \dim \ker(R_1 - 1) + \dim \ker(R_1 + 1) - ( \dim \ker(R_2 - 1) + \dim \ker(R_2 + 1)) \\
&=  \dim \ker(R_1 - 1)  - \dim \ker(R_2 - 1) + \dim \ker(R_1 + 1) - \dim \ker(R_2 + 1) \\
&= \ind_+(\varGamma, U) + \ind_-(\varGamma, U).
\end{align*}
\end{proof}

\begin{lemma}
\label{lemma: unitary invariance of the indices}
Let $(\varGamma_0,U_0), (\varGamma,U)$ be two chiral pairs on Hilbert spaces $\cH_0,\cH$ respectively. If  $(\varGamma_0, U_0), (\varGamma, U)$ are \textbi{unitarily equivalent} in the sense that $(\varGamma_0, U_0) = (\epsilon^* \varGamma \epsilon, \epsilon^* U \epsilon)$ for some unitary operator $\epsilon : \cH_0 \to \cH,$ then the following assertions hold true:
\begin{enumerate}[(i)]
\item If $\dim \ker(U_0 \mp 1) = \dim \ker(U \mp 1)$ is finite, then $\ind_\pm(\varGamma_0, U_0) = \ind_\pm(\varGamma, U).$
\item If $\dim \ker(U_0 - 1) \oplus \ker(U_0 + 1) = \dim \ker(U - 1) \oplus \ker(U + 1)$ is finite, then $\ind(\varGamma_0, U_0) = \ind(\varGamma, U).$
\end{enumerate}
\end{lemma}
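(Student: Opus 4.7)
The plan is to exploit the standard representation \cref{equation: standard representation of U} and reduce the claim to bookkeeping at the level of blocks. The key observation is that the intertwining relation $\varGamma_0 = \epsilon^* \varGamma \epsilon$ forces $\epsilon$ to respect the $\Z_2$-gradings $\cH_0 = \ker(\varGamma_0 - 1) \oplus \ker(\varGamma_0 + 1)$ and $\cH = \ker(\varGamma - 1) \oplus \ker(\varGamma + 1)$. Indeed, for any $\xi \in \cH_0$, we have $\varGamma_0 \xi = \pm \xi$ if and only if $\varGamma \epsilon \xi = \pm \epsilon \xi$, so $\epsilon$ restricts to unitary isomorphisms $\epsilon_1 : \ker(\varGamma_0 - 1) \to \ker(\varGamma - 1)$ and $\epsilon_2 : \ker(\varGamma_0 + 1) \to \ker(\varGamma + 1)$, with $\epsilon = \epsilon_1 \oplus \epsilon_2$ relative to these decompositions.

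Next I would take the real and imaginary parts of $U_0 = \epsilon^* U \epsilon$ to obtain $R_0 = \epsilon^* R \epsilon$ and $Q_0 = \epsilon^* Q \epsilon$. Writing both sides in the form \cref{equation: representation of R and Q} and multiplying out with the block-diagonal $\epsilon = \epsilon_1 \oplus \epsilon_2$ yields $R_{0,j} = \epsilon_j^* R_j \epsilon_j$ for $j = 1,2$, together with $Q_{0,1} = \epsilon_2^* Q_1 \epsilon_1$ and $Q_{0,2} = \epsilon_1^* Q_2 \epsilon_2$. Since each $\epsilon_j$ is a unitary isomorphism, each of these identities is a unitary conjugation (between possibly distinct Hilbert spaces), so the corresponding kernels are in explicit bijection; taking dimensions gives
\[
\dim \ker(R_{0,j} \mp 1) = \dim \ker(R_j \mp 1), \qquad \dim \ker Q_{0,j} = \dim \ker Q_j, \qquad j = 1, 2.
\]
The finiteness hypotheses, via the preceding lemma, guarantee that each dimension above is finite and that the indices in \cref{equation: definition of pm indices} are well-defined for both chiral pairs. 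Assertion (i) then follows by subtracting the two equalities on the left, and assertion (ii) by subtracting the two on the right.

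I do not expect any substantial obstacle; the argument is essentially bookkeeping. The only point requiring mild care is verifying that $\epsilon$ genuinely splits as $\epsilon_1 \oplus \epsilon_2$ and that the off-diagonal blocks of $\epsilon^* Q \epsilon$ swap indices correctly, so that the blocks predicted by \cref{equation: representation of R and Q} line up with those of $R_0$ and $Q_0$; once this is checked, the conclusions are immediate from the definitions.
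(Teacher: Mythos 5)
Your proposal is correct and follows essentially the same route as the paper: decompose $\epsilon = \epsilon_1 \oplus \epsilon_2$ using the intertwining relation $\varGamma_0 = \epsilon^*\varGamma\epsilon$, conjugate the blocks of the standard representation to get $R_{0,j} = \epsilon_j^* R_j \epsilon_j$, $Q_{0,1} = \epsilon_2^* Q_1 \epsilon_1$, $Q_{0,2} = \epsilon_1^* Q_2 \epsilon_2$, and conclude by invariance of kernel dimensions under unitary conjugation. The block bookkeeping you flag as the only delicate point is exactly what the paper checks, so nothing is missing.
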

\begin{proof} 
The details of what follows can be found in the proof of \cite[Lemma 2]{Asahara-Funakawa-Seki-Tanaka-2020}. Firstly, there exists a unitary operator $\epsilon_j : \ker(\varGamma_0 + (-1)^j) \to \ker(\varGamma + (-1)^j)$ for each $j=1,2,$ such that the given unitary operator $\epsilon$ can be identified with the direct sum $\epsilon_{1} \oplus \epsilon_{2} : \ker(\varGamma_0 - 1) \oplus \ker(\varGamma_0 + 1) \to \ker(\varGamma - 1) \oplus \ker(\varGamma + 1).$ Secondly, if $U$ admits the standard representation of the form \cref{equation: standard representation of U}, then the standard representation of $U_0$ is given by the following formula;
\[
U_0 = 
\begin{pmatrix}
\epsilon^*_{1}  R_1 \epsilon_{1} & i \epsilon^*_{1}  Q_2 \epsilon_{2} \\
i \epsilon^*_{2}  Q_1 \epsilon_{1} & \epsilon^*_{2} R_2 \epsilon_{2}\\
\end{pmatrix}_{\ker(\varGamma_0 - 1) \oplus \ker(\varGamma_0 + 1)}.
\]

(i) The claim follows from $\dim \ker(R_j \mp 1) = \dim \ker(\epsilon^*_{j}(R_j \mp 1)\epsilon_{j}) = \ker(\epsilon^*_{j}R_j \epsilon_{j} \mp 1)$ for each $j=1,2.$

(ii) We get
\begin{align*}
\dim \ker Q_1 &= \dim \ker Q_1^* Q_1 = \dim \ker(\epsilon^*_{1}Q_1^* Q_1\epsilon_{1}) = \dim \ker(\epsilon_2^*Q_1\epsilon_{1})^* ( \epsilon_2^*Q_1\epsilon_{1}) = \dim \ker( \epsilon_2^*Q_1\epsilon_{1}), \\
\dim \ker Q_2 &= \dim \ker Q_2^* Q_2 = \dim \ker(\epsilon^*_{2}Q_2^* Q_2\epsilon_{2}) = \dim \ker(\epsilon_1^*Q_2\epsilon_{2})^* ( \epsilon_1^*Q_2\epsilon_{2}) = \dim \ker( \epsilon_1^*Q_2\epsilon_{2}).
\end{align*}
\end{proof}

\begin{proposition}
\label{theorem: suzuki formula}
Let $(\varGamma, U)$ be a chiral pair, and let $\varGamma' := \varGamma U.$ Then the following assertions hold true:
\begin{enumerate}[(i)]
\item If $U$ admits the standard representation of the form \cref{equation: standard representation of U}, then
\begin{equation}
\label{equation: representation of Rj mp 1}
\ker(R_j \mp 1) 
=  \ker(U \mp 1) \cap \ker(\varGamma + (-1)^j)
= \ker(\varGamma + (-1)^j) \cap \ker(\varGamma' \mp (-1)^{j+1}), \qquad j=1,2.
\end{equation}
\item The pair $(\varGamma', U)$ is also a chiral pair. Moreover, if $\ker(U \mp 1)$ is finite-dimensional, then 
\begin{equation}
\label{equation: index formulas with different chiral symmetry conditions}
\ind_\pm(\varGamma, U) = \pm \ind_\pm(\varGamma', U).
\end{equation}

\item If $\ker(U - 1) \oplus \ker(U + 1)$ is finite-dimensional, then 
\begin{equation}
\ind(\varGamma', U) = \ind_+(\varGamma, U) - \ind_-(\varGamma, U).
\end{equation}
\end{enumerate}
\end{proposition}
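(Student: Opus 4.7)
The plan is to prove (i) first, and then bootstrap to (ii) and (iii). For the first equality in (i), I would recall from \cref{equation: ker U and Rj} that $\ker(U \mp 1) = \ker(R_1 \mp 1) \oplus \ker(R_2 \mp 1),$ where by construction the two summands sit inside $\ker(\varGamma - 1)$ and $\ker(\varGamma + 1)$ respectively; intersecting with $\ker(\varGamma + (-1)^j)$ therefore isolates the $j$-th summand. For the second equality the point is that $\varGamma^2 = 1$ forces $\varGamma \xi = (-1)^{j+1} \xi$ for every $\xi \in \ker(\varGamma + (-1)^j),$ so that $\varGamma' \xi = \varGamma U \xi$ equals $\pm (-1)^{j+1} \xi$ precisely when $U \xi = \pm \xi;$ this identifies $\ker(\varGamma + (-1)^j) \cap \ker(\varGamma' \mp (-1)^{j+1})$ with $\ker(\varGamma + (-1)^j) \cap \ker(U \mp 1),$ matching the first equality.

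For (ii), I would first verify that $(\varGamma', U)$ is itself a chiral pair. Unitarity of $\varGamma'$ is automatic, self-adjointness follows from $(\varGamma U)^* = U^* \varGamma = \varGamma U \varGamma \cdot \varGamma = \varGamma U$ via \cref{equation: chiral symmetry} and $\varGamma^2 = 1,$ and the relation $U^* = \varGamma' U \varGamma'$ is then a short direct calculation using $\varGamma U \varGamma = U^*$ twice. For the index identity I would apply part (i) to \emph{both} chiral pairs to obtain
\begin{align*}
\ind_\pm(\varGamma, U) &= \dim\bigl(\ker(U \mp 1) \cap \ker(\varGamma - 1)\bigr) - \dim\bigl(\ker(U \mp 1) \cap \ker(\varGamma + 1)\bigr), \\
\ind_\pm(\varGamma', U) &= \dim\bigl(\ker(U \mp 1) \cap \ker(\varGamma' - 1)\bigr) - \dim\bigl(\ker(U \mp 1) \cap \ker(\varGamma' + 1)\bigr).
\end{align*}
The crucial observation is then that on $\ker(U - 1)$ one has $\varGamma' = \varGamma U = \varGamma,$ whereas on $\ker(U + 1)$ one has $\varGamma' = -\varGamma.$ Hence for $\ind_+$ the two expressions agree outright, while for $\ind_-$ the two terms in the difference swap, producing exactly the sign predicted by $\ind_\pm(\varGamma, U) = \pm \ind_\pm(\varGamma', U).$

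Finally, (iii) is immediate from (ii) combined with the additivity \cref{Equation: Witten Index is Sum of pm Indies} applied to the chiral pair $(\varGamma', U),$ giving $\ind(\varGamma', U) = \ind_+(\varGamma', U) + \ind_-(\varGamma', U) = \ind_+(\varGamma, U) - \ind_-(\varGamma, U).$ I expect the main obstacle to be purely combinatorial bookkeeping, namely keeping the two $\pm$ choices, the index $j,$ and the sign coming from $\varGamma' = \pm \varGamma$ on $\ker(U \mp 1)$ all consistent; to avoid slips I would fix the dictionary $\ker(R_j \mp 1) = \ker(U \mp 1) \cap \ker(\varGamma + (-1)^j)$ from part (i) up front and apply it uniformly throughout the remainder of the proof.
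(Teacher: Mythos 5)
Your proposal is correct, and its overall skeleton (prove (i), deduce (ii) by comparing intersection dimensions for the two chiral pairs, get (iii) from the additivity formula \cref{Equation: Witten Index is Sum of pm Indies} applied to $(\varGamma',U)$) matches the paper's. The micro-arguments differ in a pleasant way: where the paper establishes the second equality of \cref{equation: representation of Rj mp 1} by writing out $\varGamma' \mp 1$ as an explicit block matrix in the standard representation and intersecting kernels via \cref{Equation: Rj and Qj}, you argue vector-by-vector that for $\xi \in \ker(\varGamma + (-1)^j)$ the conditions $U\xi = \pm\xi$ and $\varGamma'\xi = \pm(-1)^{j+1}\xi$ are equivalent, using only $\varGamma^2 = 1$; and where the paper proves \cref{equation: index formulas with different chiral symmetry conditions} by tabulating the four quantities $m'_{j,\pm}$ against $m_{j,\pm}$ via symmetry of set intersection, you observe directly that $\varGamma'$ restricts to $\varGamma$ on $\ker(U-1)$ and to $-\varGamma$ on $\ker(U+1)$, which makes the sign in $\ind_\pm(\varGamma,U) = \pm\ind_\pm(\varGamma',U)$ transparent. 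Your version is matrix-free and slightly more conceptual; the paper's version keeps everything inside the standard-representation formalism it set up earlier. Both are complete, and your part (iii) coincides with the paper's.
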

\begin{proof}
(i)  We have
\[
U \mp 1  =
\begin{pmatrix}
R_1 \mp 1 & i Q_2 \\
i Q_1 & R_2 \mp 1\\
\end{pmatrix}.
\]
It follows from this equality that
\[
\ker(U \mp 1) \cap \ker(\varGamma + (-1)^j) = \ker(R_j \mp 1) \cap \ker Q_j = \ker(R_j \mp 1),
\]
where the last equality follows from \cref{Equation: Rj and Qj}. Similarly, we have
\[
\varGamma' \mp  1
=  \varGamma U \mp  1
=
\begin{pmatrix}
1 & 0 \\
0 & -1
\end{pmatrix}
\begin{pmatrix}
R_1  & i Q_2 \\
i Q_1 & R_2 \\
\end{pmatrix} \mp  1
=
\begin{pmatrix}
R_1 \mp 1  & i Q_2 \\
-i Q_1 & -(R_2  \pm 1)\\
\end{pmatrix}.
\]
We obtain
\begin{align*}
\ker(\varGamma -1) \cap \ker(\varGamma' \mp  1)  = \ker(R_1 \mp 1) \cap \ker Q_1 = \ker(R_1 \mp 1), \\
\ker(\varGamma +1) \cap \ker(\varGamma' \mp  1)  = \ker(R_2 \pm 1) \cap \ker Q_2 = \ker(R_2 \pm 1).
\end{align*}
The above identities can be written as the single formula \cref{equation: representation of Rj mp 1}.

(ii) Note that $(\varGamma', U)$ is a chiral pair, since $\varGamma' U \varGamma' = \varGamma' (\varGamma \varGamma') \varGamma' = \varGamma' \varGamma  = U^*.$ Let $\ker(U \mp 1)$ be finite-dimensional, and let
\begin{align*}
m_{j,\pm} &:= \dim \left(\ker(\varGamma + (-1)^j) \cap \ker(\varGamma' \mp (-1)^{j+1}) \right), \\
m'_{j,\pm} &:= \dim \left(\ker(\varGamma' + (-1)^j) \cap \ker(\varGamma \mp (-1)^{j+1}) \right).
\end{align*}
It follows from (i) that $\ind_\pm(\varGamma,U) = m_{1,\pm} - m_{2,\pm}$ and $\ind_\pm(\varGamma',U) = m'_{1,\pm} - m'_{2,\pm}.$ The formula \cref{equation: index formulas with different chiral symmetry conditions} is an immediate consequence of the following equalities:
\begin{align*}
m'_{1,+} &=  m_{1,+}, & m'_{2,+} &= m_{2,+}, \\
m'_{1,-} &=  m_{2,-}, & m'_{2,-} &= m_{1,-}.
\end{align*}

(iii) This follows from (i) and (ii).
\end{proof}

\begin{remark}
If $\dim \ker(U \mp 1) < \infty,$ then it follows from the first equality in \cref{equation: representation of Rj mp 1} that
\[
\ind_\pm(\varGamma, U) = \dim \ker(U \mp 1) \cap \ker(\varGamma - 1) -  \dim \ker(U \mp 1) \cap \ker(\varGamma + 1).
\]
Note that the right hand side can be written as $\Tr(\varGamma|_{\ker(U \mp 1)}),$ where $\ker(U - 1)$ and $\ker(U + 1)$ are $\varGamma$-invariant subspaces by the chiral symmetry condition \cref{equation: chiral symmetry}. That is, the previously mentioned formula \cref{equation: definition of symmetry indices} is consistent with \cref{equation: definition of pm indices}.
\end{remark}

\begin{corollary}
Let $(\varGamma, U)$ be a chiral pair, and let $\ker(U - 1) \oplus \ker(U + 1)$ be finite-dimensional. Then we have the following formulas:
\begin{align}
\label{Equation: Minus Abstract Coin} 
\ind_\pm(\varGamma, -U) &= \ind_\mp(\varGamma,U),           &  \ind(\varGamma,-U) &= \ind(\varGamma,U), \\
\ind_\pm(-\varGamma,U) &= - \ind_\pm(\varGamma,U),           &  \ind(-\varGamma,U) &= - \ind(\varGamma,U).
\end{align}
\end{corollary}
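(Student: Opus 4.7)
The plan is a direct unpacking of the definition \cref{equation: definition of pm indices} in each of the four cases, after checking that both new pairs $(\varGamma,-U)$ and $(-\varGamma,U)$ are themselves chiral in the sense of \cref{equation: chiral symmetry}. The latter is automatic: $(-U)^{\ast} = -U^{\ast} = \varGamma(-U)\varGamma$ and $U^{\ast} = \varGamma U \varGamma = (-\varGamma) U (-\varGamma).$ Combined with \cref{equation: topological protection of bounded states} applied to the new pairs, the finite-dimensionality assumption on $\ker(U-1)\oplus\ker(U+1)$ will guarantee well-definedness of every index appearing on either side of the claimed equalities (using that $\ker((-U)\mp 1) = \ker(U\pm 1)$).

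First I would handle the two identities involving $-U.$ If $U = R + iQ$ is the Cartesian decomposition, then $-U = (-R) + i(-Q)$ is the Cartesian decomposition of $-U,$ and the block pattern of \cref{equation: representation of R and Q} is preserved since $-R$ is still block-diagonal and $-Q$ still off-diagonal with respect to $\ker(\varGamma-1)\oplus\ker(\varGamma+1).$ Hence the standard representation of $-U$ with respect to $\varGamma$ is obtained from \cref{equation: standard representation of U} by replacing $(R_{1},R_{2},Q_{1},Q_{2})$ with $(-R_{1},-R_{2},-Q_{1},-Q_{2}).$ Since $\ker(-R_{j} \mp 1) = \ker(R_{j} \pm 1)$ and $\ker(-Q_{j}) = \ker Q_{j},$ the definitions \cref{equation: definition of pm indices} immediately yield $\ind_{\pm}(\varGamma,-U) = \ind_{\mp}(\varGamma,U)$ and $\ind(\varGamma,-U) = \ind(\varGamma,U).$

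Next I would handle the two identities involving $-\varGamma.$ The key observation is that replacing $\varGamma$ with $-\varGamma$ swaps the two summands of the $\Z_{2}$-grading, i.e.\ $\ker(-\varGamma-1) = \ker(\varGamma+1)$ and $\ker(-\varGamma+1) = \ker(\varGamma-1).$ Rewriting \cref{equation: standard representation of U} with respect to the reordered decomposition $\ker(-\varGamma-1) \oplus \ker(-\varGamma+1) = \ker(\varGamma+1)\oplus\ker(\varGamma-1),$ the standard representation of $U$ with respect to $-\varGamma$ has new diagonal entries $(R_{2},R_{1})$ and new off-diagonal entries $(Q_{2},Q_{1})$ in place of $(R_{1},R_{2})$ and $(Q_{1},Q_{2})$ respectively. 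Feeding this into \cref{equation: definition of pm indices} gives
\[
\ind_{\pm}(-\varGamma,U) = \dim\ker(R_{2}\mp 1) - \dim\ker(R_{1}\mp 1) = -\ind_{\pm}(\varGamma,U),
\]
and likewise $\ind(-\varGamma,U) = \dim\ker Q_{2} - \dim\ker Q_{1} = -\ind(\varGamma,U).$

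There is no real obstacle: the whole argument is bookkeeping at the level of the $2\times 2$ block decomposition. The only point that requires a moment of care is that, when passing from $\varGamma$ to $-\varGamma,$ one must respect the convention built into \cref{equation: standard representation of U} that places $\ker(\varGamma-1)$ as the \emph{first} summand; once this reordering is written down explicitly, the sign flip in $\ind_{\pm}$ and $\ind$ is transparent.
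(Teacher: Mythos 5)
Your proposal is correct and follows essentially the same route as the paper: both arguments read off the standard representation of $-U$ with respect to $\varGamma$ (replacing $R_j, Q_j$ by $-R_j, -Q_j$) and of $U$ with respect to $-\varGamma$ (swapping the two summands of the grading, hence the roles of $R_1,R_2$ and $Q_1,Q_2$), then apply the definition \cref{equation: definition of pm indices}. The only cosmetic difference is that the paper deduces the identities for $\ind$ from the sum formula \cref{Equation: Witten Index is Sum of pm Indies}, whereas you compute them directly from $\ker(\pm Q_j)$; both are fine.
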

\begin{proof}
If $U$ admits the standard representation of the form \cref{equation: standard representation of U} with respect to $\varGamma,$ then the standard representation of $-U$ is
\[
-U = 
\begin{pmatrix}
-R_1 & -iQ_2 \\
-iQ_1 & -R_2
\end{pmatrix}_{\ker(\varGamma - 1) \oplus \ker(\varGamma + 1)}.
\]
It follows that $\ind_\pm(\varGamma, -U) = \ind_\mp(\varGamma,U),$ and so $\ind(\varGamma,-U) = \ind(\varGamma,U)$ by \cref{Equation: Witten Index is Sum of pm Indies}. Similarly, the standard representation of $U$ with respect to $-\varGamma$ is
\[
U = 
\begin{pmatrix}
R_2 & iQ_1 \\
iQ_2 & R_1
\end{pmatrix}_{\ker(\varGamma + 1) \oplus \ker(\varGamma - 1)}.
\]
We have $\ind_\pm(-\varGamma, U) = \dim \ker (R_2 \mp 1) - \dim \ker (R_1 \mp 1) = -\ind_\mp(\varGamma, U),$ and so $\ind(-\varGamma,U) = - \ind(\varGamma,U).$
\end{proof}

\section{The Bulk-edge Correspondence}
\label{section: bulk-edge correspondence}

We are now in a position to state the following generalisation of \cref{theorem: baby bulk-edge correspondence};

\begin{theorem}
\label{theorem: bulk-boundary correspondence}
Let $U = \Usuz$ be the evolution operator of Suzuki's split-step quantum walk given by \cref{equation: suzuki split-step quantum walk}, and let \cref{equation: bounded assumption} hold true. Let us introduce the following notation:
\begin{align}
\label{equation: definition of F}
\Lambda(\kappa) &:= \frac{1+\kappa}{1-\kappa}, \qquad \kappa \in (-1,1), \\
\label{equation: definition of delta jpm}
\delta_{j,\pm}(y) &:= \frac{\sqrt{ \Lambda((-1)^j p(y))\Lambda(\mp (-1)^j a(y))}}{\pm  e^{i (\Arg q(y) + \Arg b(y))}},  \qquad y \in \Z, \\
\label{equation: definition of Delta jpm}
\Delta_{j, \pm} &:= \sum_{x =1}^\infty \left(\prod_{y=1}^{x} |\delta_{j,\pm}(-y)|^{-2}\right) + \sum_{x =1}^\infty \left(\prod_{y=0}^{x-1} |\delta_{j,\pm}(y)|^2\right),
\end{align}
where $j=1,2,$ and where $\Arg w$ denotes the principal argument of a non-zero complex number $w.$ Then $\Delta_{1, \pm}$ and $\Delta_{2, \pm}$ cannot be simultaneously finite, and the following assertions hold true:
\begin{enumerate}[(i)]
\item The dimension of $\ker(U \mp 1)$ is at most $1.$ More explicitly, we have 
\begin{align}
\label{equation: bulk-edge correspondence for ssqw}
|\ind_\pm(\varGamma,U)| &= \dim \ker(U \mp 1), \\
\label{equation: pm index for ssqw}
\ind_\pm(\varGamma,U) &= 
\begin{cases}
+1, & \Delta_{1, \pm} < \infty, \\
-1, & \Delta_{2, \pm} < \infty, \\
0,  & \Delta_{1, \pm} =  \Delta_{2, \pm} = \infty.
\end{cases}
\end{align}

\item If $\Delta_{j, \pm} < \infty$ for some $j=1,2,$ then we have the linear isomorphism $\tau_{j, \pm}: \ker(L - \delta_{j,\pm}) \to \ker(U \mp 1)$ defined by
\begin{equation}
\label{Equation2: Linear Isomorphism}
\tau_{j, \pm}(\psi):=
\begin{pmatrix}
\frac{\sqrt{\Lambda(\mp (-1)^j a )}}{\mp (-1)^j e^{i \Arg b}} \psi \\
\psi
\end{pmatrix}, \qquad \psi \in \ker(L - \delta_{j,\pm}).
\end{equation}
That is, $\dim \ker(L - \delta_{j,\pm}) = \dim \ker(U \mp 1) = 1$ according to \crefrange{equation: bulk-edge correspondence for ssqw}{equation: pm index for ssqw}.

\item For each $j=1,2,$ let
\begin{align}
\delta^{\downarrow}_{j, \pm} &:= \min\left\{\liminf_{x \to \infty} \left(\prod_{y=1}^{x} |\delta_{j, \pm}(-y)|^{-2}\right)^{1/x}, \qquad  \liminf_{x \to \infty} \left(\prod_{y=0}^{x-1} |\delta_{j, \pm}(y)|^2\right)^{1/x} \right\}, \\
\delta^{\uparrow}_{j, \pm} &:= \max\left\{\limsup_{x \to \infty} \left(\prod_{y=1}^{x} |\delta_{j, \pm}(-y)|^{-2}\right)^{1/x}, \qquad \limsup_{x \to \infty} \left(\prod_{y=0}^{x-1} |\delta_{j, \pm}(y)|^2\right)^{1/x} \right\}, \\
\Lambda_{j, \pm}^\downarrow & := \inf_{x \in \Z} \Lambda(\mp (-1)^j a(x)) + 1, \\
\Lambda_{j, \pm}^\uparrow & := \sup_{x \in \Z} \Lambda(\mp (-1)^j a(x)) + 1.
\end{align}
If $0 < \delta^{\downarrow}_{j,\pm} \leq \delta^{\uparrow}_{j,\pm} < 1$ for some $j=1,2,$  then $\Delta_{j, \pm} < \infty.$ Moreover, in this case, for any $\epsilon > 0$ satisfying $0 < \delta^{\downarrow}_{j,\pm} - \epsilon  < \delta^{\uparrow}_{j,\pm} + \epsilon < 1,$ there exists $x_\pm \in \N$ with the property that if $\psi \in \ker(L - \delta_{j,\pm})$ is a non-zero vector, then $\Psi := \tau_{j,\pm}(\psi)$ given by \cref{Equation2: Linear Isomorphism} exhibits the following exponential decay;
\begin{equation}
\label{equation2: exponential decay}
\Lambda_{j, \pm}^\downarrow \left(\delta^{\downarrow} _{j, \pm}- \epsilon \right)^{|x|} \leq \frac{\|\Psi(x)\|^2}{|\psi(0)|^2} \leq  \Lambda_{j, \pm}^\uparrow \left(\delta^{\uparrow}_{j, \pm} + \epsilon\right)^{|x|}, \qquad |x| \geq x_\pm.
\end{equation}
\end{enumerate}
\end{theorem}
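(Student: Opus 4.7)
The plan is to reduce the eigenvalue problem $U\Psi = \pm\Psi$ to a single first-order difference equation by exploiting the factorisation $U = \varGamma C$, where $C = \begin{pmatrix} a & b^* \\ b & -a \end{pmatrix}$ is the self-adjoint unitary coin. Setting $\varGamma' := \varGamma U$ yields $\varGamma' = \varGamma^2 C = C$, so \cref{theorem: suzuki formula}(i) applied to $(\varGamma, U)$ gives
\begin{equation*}
\ker(R_j \mp 1) = \ker(\varGamma + (-1)^j) \cap \ker(C \mp (-1)^{j+1}), \qquad j = 1, 2.
\end{equation*}
Each summand $\ker(R_j \mp 1)$ is thus cut out by two commuting conditions $\varGamma\Psi = \epsilon\Psi$ and $C\Psi = \epsilon'\Psi$, with signs $(\epsilon, \epsilon')$ determined by $(\mu, j)$.

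The condition $C\Psi = \epsilon'\Psi$ is a pointwise $2\times 2$ eigenvalue problem; since $|a(x)| < 1$ forces $b(x) \neq 0$, each eigenspace is one-dimensional. Writing $b = |b|e^{i\Arg b}$ and $|b|^2 = (1-a)(1+a)$ identifies the eigenvector with $\Psi_1(x) = \alpha_{j,\pm}(x)\Psi_2(x)$, where $\alpha_{j,\pm}$ is exactly the prefactor appearing in \cref{Equation2: Linear Isomorphism}. Substituting this into the first row of $\varGamma\Psi = \epsilon\Psi$, together with the identity $(1 \mp p)/|q| = \sqrt{\Lambda(\mp p)}$ (valid since $|p(x)| < 1$), produces $L\Psi_2 = \delta_{j,\pm}\Psi_2$ with precisely the $\delta_{j,\pm}$ of \cref{equation: definition of delta jpm}. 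The second row of the $\varGamma$-condition is automatic: the consistency check between the two rows collapses to $p^2 + |q|^2 = 1$. Formal solutions of $L\psi = \delta_{j,\pm}\psi$ form a one-dimensional space given by the telescoping products $\psi(x) = \psi(0)\prod_{y=0}^{x-1}\delta_{j,\pm}(y)$ for $x \geq 1$ (with the mirror formula for $x \leq -1$), so
\begin{equation*}
\|\psi\|_{\ell^2(\Z)}^2 = |\psi(0)|^2(1 + \Delta_{j,\pm}).
\end{equation*}
Hence $\dim\ker(R_j \mp 1)$ equals $1$ when $\Delta_{j,\pm} < \infty$ and $0$ otherwise, and $\tau_{j,\pm}$ is the resulting linear isomorphism.

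To rule out simultaneous finiteness of $\Delta_{1,\pm}$ and $\Delta_{2,\pm}$, I would exploit the identity $|\delta_{1,\pm}(y)||\delta_{2,\pm}(y)| = 1$, which follows from $\Lambda(\kappa)\Lambda(-\kappa) = 1$ applied in both the $p$- and $a$-factors: if $\Delta_{j,\pm} < \infty$, then its first summand forces $\prod_{y=0}^{x-1}|\delta_{j,\pm}(y)|^2 \to 0$ geometrically, hence $\prod_{y=0}^{x-1}|\delta_{3-j,\pm}(y)|^2 \to \infty$, and the matching summand of $\Delta_{3-j,\pm}$ diverges. Assertions (i) and (ii) then follow by tabulating the three cases via \cref{equation: definition of pm indices}. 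For (iii), $\|\Psi(x)\|^2 = (\Lambda(\mp(-1)^j a(x)) + 1)|\psi(x)|^2$: the $\Lambda$-factor is uniformly sandwiched between $\Lambda_{j,\pm}^\downarrow$ and $\Lambda_{j,\pm}^\uparrow$ by \cref{equation: bounded assumption}, while the $|x|$-th roots of the telescoping products lie in $[\delta^\downarrow_{j,\pm} - \epsilon, \delta^\uparrow_{j,\pm} + \epsilon]$ for all sufficiently large $|x|$ by the $\liminf/\limsup$ definitions, yielding \cref{equation2: exponential decay}. The main technical obstacle is threading the reciprocal identity through both tails of each $\Delta_{j,\pm}$ cleanly; the remainder is a systematic case check over the four $(\mu, j)$ configurations.
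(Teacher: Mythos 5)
Your proposal is correct and follows essentially the same route as the paper: decompose $\ker(U\mp 1)$ via \cref{theorem: suzuki formula}(i) into intersections of the graph-type eigenspaces of $\varGamma$ and $\varGamma'=\varGamma U=C$, reduce each intersection to the first-order difference equation $L\psi = \delta_{j,\pm}\psi$ whose $\ell^2$-solvability is governed by $\Delta_{j,\pm}$, and use the pointwise identity $|\delta_{1,\pm}(y)|^2|\delta_{2,\pm}(y)|^2 = 1$ to exclude simultaneous finiteness. The only cosmetic point is that convergence of $\sum_{x}\prod_{y=0}^{x-1}|\delta_{j,\pm}(y)|^2$ merely forces its terms to tend to $0$ (not necessarily geometrically), which is already enough for the reciprocal identity to yield divergence of the corresponding summand of $\Delta_{3-j,\pm}$.
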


\begin{remark}
\label{remark: bulk-boundary correspondence}
We have the following remarks:
\begin{enumerate}[(i)]
\item Note that the function $\Lambda$ defined by \cref{equation: definition of F} is a bijection from $(-1,1)$ onto $(0, \infty),$ and that the graph of $\Lambda$ is given by the following figure;
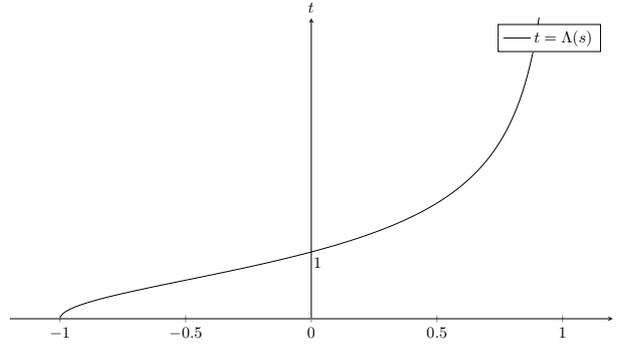
\begin{figure}[H]
\centering
\label{graph: graph of g}
\begin{tikzpicture}[scale=0.6]
\begin{axis}[axis lines=center,width = 0.9\textwidth,height = 0.5\textwidth,xlabel=$s$, xlabel style={anchor = west}, ylabel=$t$, ylabel style={anchor = south}, xtick= {-1, -1/2, 0, 1/2, 1}, extra x ticks={0}, xticklabel style={anchor = north}, xmin= -1.2, xmax=1.2, ymin= 0, ymax=4.5, ytick= {0,1}, yticklabel style = {anchor = north west}, ytick style={draw=none}]
\addplot[samples=200, domain=-0.999:0.999]{sqrt((1 + x)/(1 - x))};
\addlegendentry{$t = \Lambda(s)$}
\end{axis}
\end{tikzpicture}
\caption{This figure represents the graph of $t = \Lambda(s).$}
\label{figure: graph of Lambda}
\end{figure}
That is, $\Lambda_{j,\pm}$ defined by \cref{equation: definition of Delta jpm} is either a finite positive number or $+\infty.$ The formula \cref{equation: pm index for ssqw} is a complete classification of $\ind_\pm(\varGamma,U),$ since $\Lambda_{1,\pm}$ and $\Lambda_{2,\pm}$ cannot be finite at the same time according to \cref{theorem: bulk-boundary correspondence}.

\item It is in general difficult to compute $\delta^{\downarrow}_{j, \pm}, \delta^{\uparrow}_{j, \pm},$ but the following estimates may be useful\footnote 
{
The estimate \crefrange{equation1: estimate for delta}{equation2: estimate for delta} can be easily proved by the well-known fact (see, for example, \cite[Theorem 3.37]{Rudin-1976}) that given a sequence $(\alpha(x))_{x \in \N}$ of positive numbers, we have
\[
\liminf_{x \to \infty} \frac{\alpha(x+1)}{\alpha(x)} 
\leq 
\liminf_{x \to \infty} \alpha(x)^{1/x}
\leq 
\limsup_{x \to \infty} \alpha(x)^{1/x}
\leq 
\limsup_{x \to \infty} \frac{\alpha(x+1)}{\alpha(x)}.
\]
}:
\begin{align}
\label{equation1: estimate for delta}
\delta^{\downarrow}_{j, \pm} &\geq \min\left\{\liminf_{x \to \infty} |\delta_{j, \pm}(-x)|^{-2}, \liminf_{x \to \infty} |\delta_{j, \pm}(x)|^{2}\right\}, \\
\label{equation2: estimate for delta}
\delta^{\uparrow}_{j, \pm} &\leq \max\left\{\limsup_{x \to \infty} |\delta_{j, \pm}(-x)|^{-2}, \limsup_{x \to \infty} |\delta_{j, \pm}(x)|^{2} \right\}.
\end{align}
\end{enumerate}
\end{remark}

\subsection{Proof of the bulk-edge correspondence}
The purpose of the current section is to prove \cref{theorem: bulk-boundary correspondence}. We can then obtain \cref{theorem: baby bulk-edge correspondence} as an immediate corollary. In what follows we shall make use of the following obvious properties of $\Lambda$ without any further comment. For each $s,s' \in (-1,1),$ we have
\begin{align}
\label{equation1: properties of F}
&\Lambda(-s) = \Lambda(s)^{-1}, \\
\label{equation2: properties of F}
&\Lambda(s) \Lambda(s') = \Lambda\left( \frac{s + s'}{1 + ss'}\right), \\
\label{equation3: properties of F}
&\Lambda(s) \Lambda(s') \lesseqgtr 1 \mbox{ if and only if } s + s' \lesseqgtr 0, 
\end{align}
where $1 + ss' > 0$ in \cref{equation2: properties of F}, and where the notation $\lesseqgtr$ in \cref{equation3: properties of F} simultaneously denotes the three binary relations $>, =, <.$

\begin{lemma}
\label{lemma: intersection gives difference equation}
If \cref{equation: bounded assumption} holds true, then we have the following well-defined linear isomorphisms:
\begin{equation}
\label{Equation: Linear Isomorphism}
\ker(L - \delta_{j,\pm}) \ni \psi \longmapsto 
\begin{pmatrix}
\frac{a \mp (-1)^j}{b}\psi \\
\psi
\end{pmatrix} \in 
\ker(\varGamma + (- 1)^j) \cap \ker(\varGamma' \mp (-1)^{j+1}), \qquad j=1,2,
\end{equation}
where the bounded sequence $\delta_{j, \pm}$ is defined by \cref{equation: definition of delta jpm}.
\end{lemma}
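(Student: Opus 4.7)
The strategy is to replace $\varGamma'$ by the pointwise coin operator and thereby reduce the joint kernel to a single scalar difference equation on $\psi := \Psi_2$. Since $\varGamma$ is unitary self-adjoint, $\varGamma^2 = 1$, and writing $U = \varGamma C$ for the factorisation in \cref{equation: suzuki split-step quantum walk} gives $\varGamma' = \varGamma U = C$. Because $C$ acts pointwise and $a(x)^2 + |b(x)|^2 = 1$, at each $x \in \Z$ the matrix $C(x)$ has eigenvalues $\pm 1$, and its $\lambda$-eigenspace ($\lambda \in \{\pm 1\}$) consists of vectors satisfying $\Psi_1(x) = \frac{\lambda + a(x)}{b(x)} \Psi_2(x)$; the division makes sense because $\sup|a| < 1$ forces $|b|^2 = 1 - a^2 \geq 1 - (\sup|a|)^2 > 0$. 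A quick case check against the four combinations of $j \in \{1,2\}$ and sign in $\mp$ identifies the correct $\varGamma'$-eigenvalue and shows $\lambda + a = a \mp (-1)^j$, matching the coefficient asserted in \cref{Equation: Linear Isomorphism}.

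Next, substitute the vector $\Psi$ with top entry $\frac{a \mp (-1)^j}{b}\psi$ and bottom entry $\psi$ into $\varGamma\Psi = (-1)^{j+1}\Psi$, which is the condition $\Psi \in \ker(\varGamma + (-1)^j)$. The top row $p\Psi_1 + qL\psi = (-1)^{j+1}\Psi_1$ rearranges to $L\psi = \delta\,\psi$ with $\delta := \frac{((-1)^{j+1} - p)(a \mp (-1)^j)}{qb}$. Using $|q|^2 = 1 - p^2$ and $|b|^2 = 1 - a^2$, direct computation gives $|\delta|^2 = \Lambda((-1)^j p)\,\Lambda(\mp(-1)^j a)$; tracking the signs of the real factors $(-1)^{j+1} - p$ and $a \mp (-1)^j$ together with the phase $e^{-i(\Arg q + \Arg b)}$ of $(qb)^{-1}$ then shows $\delta = \delta_{j,\pm}$ from \cref{equation: definition of delta jpm}. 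The bottom row of $\varGamma\Psi = (-1)^{j+1}\Psi$ is then automatic: plugging in the recursion $L\psi = \delta_{j,\pm}\psi$ and collapsing the factor $((-1)^{j+1})^2 - p(\cdot-1)^2 = |q(\cdot-1)|^2$ turns it into a tautology.

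Finally, $\sup|p|, \sup|a| < 1$ keeps $(-1)^j p$ and $\mp(-1)^j a$ inside a compact subinterval of $(-1,1)$ on which $\Lambda$ is continuous, so $\delta_{j,\pm}$, $(a \mp (-1)^j)/b$, and $1/b$ are bounded sequences; hence $L - \delta_{j,\pm}$ is a bounded operator on $\ell^2(\Z)$ and the map \cref{Equation: Linear Isomorphism} sends $\ker(L - \delta_{j,\pm}) \subset \ell^2(\Z)$ into $\ell^2(\Z,\C^2)$. The inverse is $\Psi \mapsto \Psi_2$, and linearity of both directions is immediate. The main hurdle I anticipate is the bookkeeping: correctly pairing each of the four $(j,\pm)$-combinations with the right $\varGamma'$-eigenvalue, verifying the redundancy of the bottom row of the $\varGamma$-eigenvalue equation, and matching the real-positive modulus $|\delta|$ with the signed square root in \cref{equation: definition of delta jpm} with the correct overall phase.
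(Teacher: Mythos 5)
Your proposal is correct and follows essentially the same route as the paper: identifying $\varGamma' = \varGamma U$ with the pointwise coin, describing the relevant eigenspaces of $\varGamma'$ and $\varGamma$ as graphs over the second component, and reducing the intersection to the scalar equation $L\psi = \delta_{j,\pm}\psi$ with the same rearranged expression $\delta_{j,\pm} = \frac{((-1)^{j+1}-p)(a\mp(-1)^j)}{qb}$. The only cosmetic difference is that the paper computes the two kernels separately and then intersects, whereas you parametrize by the $\varGamma'$-eigenspace and verify that the second row of the $\varGamma$-eigenvalue equation is redundant; the content is identical.
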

That is,
\[
\ker(\varGamma + (- 1)^j) \cap \ker(\varGamma' \mp (-1)^{j+1})
=
\left\{
\begin{pmatrix}
\frac{a \mp (-1)^j}{b}\psi \\
\psi
\end{pmatrix} \mid
\psi \in \ker(L - \delta_{j,\pm})
\right\}, \qquad j=1,2.
\]
\begin{proof}
It follows from  \cref{equation: bounded assumption} that the following sequences are bounded:
\begin{equation}
\label{equation: bounded sequences}
\frac{-b^*}{a \mp 1} = \frac{a \pm 1}{b}, \qquad \frac{-q}{p \mp 1} = \frac{p \pm 1}{q^*}.
\end{equation}
For each  $\psi_1, \psi_2 \in \ell^2(\Z),$ let us first consider the following $\C^2$-vector;
\[
(\varGamma' \mp 1)
\begin{pmatrix}
\psi_1 \\
\psi_2
\end{pmatrix}
=
\begin{pmatrix}
a \mp 1 & b^* \\
b & -(a \pm 1)
\end{pmatrix}
\begin{pmatrix}
\psi_1 \\
\psi_2
\end{pmatrix} =
\begin{pmatrix}
(a \mp 1) \psi_1 +  b^*\psi_2 \\
b \psi_1 -(a \pm 1)\psi_2
\end{pmatrix}, 
\]
where $(a \mp 1) \psi_1 +  b^*\psi_2 = 0$ if and only if $b \psi_1 -(a \pm 1)\psi_2  = 0$ by the first equality in \cref{equation: bounded sequences}. It follows that the following equality holds true;
\[
\ker(\varGamma' \mp 1) =
\ker\left(
\begin{pmatrix}
a & b^* \\
b & -a
\end{pmatrix} 
\mp 1\right) =
\left\{
\begin{pmatrix}
 \frac{a \pm 1}{b} \psi \\
\psi
\end{pmatrix} \mid
\psi \in \ell^2(\Z)
\right\}.
\]
It follows that
\[
\ker(\varGamma \mp 1) =
\ker
\begin{pmatrix}
1 & 0\\
0  &  L^*
\end{pmatrix}
\left(
\begin{pmatrix}
p   & q \\
q^* & -p
\end{pmatrix}
\mp 1\right) 
\begin{pmatrix}
1 & 0\\
0  &  L
\end{pmatrix}
=
\left\{
\begin{pmatrix}
1 & 0\\
0  &  L^*
\end{pmatrix}
\begin{pmatrix}
\frac{-q}{p \mp 1}\psi \\
\psi
\end{pmatrix} \mid
\psi \in \ell^2(\Z)
\right\}, 
\]
where the last equality follows from the second equality in \cref{equation: bounded sequences}. We obtain the following equalities for each $j=1,2:$
\begin{align}
\label{equation: first part}
\ker(\varGamma' \mp (-1)^{j+1}) &=
\left\{
\begin{pmatrix}
\frac{a \mp (-1)^j}{b}\psi \\
\psi
\end{pmatrix} \mid
\psi \in \ell^2(\Z)
\right\},
\\
\label{equation: second part}
\ker(\varGamma + (- 1)^{j}) &=
\left\{
\begin{pmatrix}
\frac{-q}{p + (- 1)^j}L\psi \\
\psi
\end{pmatrix} \mid
\psi \in \ell^2(\Z)
\right\}.
\end{align}
Next, we show that \cref{Equation: Linear Isomorphism} is a well-defined linear transform. Note that the bounded sequence $\delta_{j,\pm}$ consists of:
\begin{align*}
\Lambda((-1)^j p) &= \frac{1 + (-1)^j p}{1 - (-1)^j p} \times  \frac{1 + (-1)^j p}{1 + (-1)^j p}  = \frac{(1 + (-1)^j p)^2}{|q|^2}, \\
\Lambda(\mp (-1)^j a) &= \frac{1 \mp (-1)^j a}{1 \pm (-1)^j a} \times  \frac{1 \mp (-1)^j a}{1 \mp (-1)^j a}  = \frac{(1 \mp (-1)^j a)^2}{|b|^2},
\end{align*}
where $1 + (-1)^j p$ and $1 \mp (-1)^j a$ are sequences of positive numbers. We obtain
\begin{equation}
\label{equation: rearrangement of delta}
\delta_{j,\pm} 
= \pm e^{-i (\theta + \phi)}  \sqrt{\Lambda((-1)^j p)\Lambda(\mp (-1)^j a)} = \pm \frac{1 + (-1)^j p}{q} \frac{1 \mp (-1)^j a}{b} = \frac{p + (-1)^j}{-q}\frac{a \mp (-1)^j}{b}.
\end{equation}
It follows from \cref{equation: rearrangement of delta} that given $\psi \in \ell^2(\Z),$ we have that $\psi \in \ker(L - \delta_{j,\pm})$ if and only if the following equality holds true;
\[
\frac{-q}{p + (-1)^j}  L \psi = \frac{a \mp (-1)^j}{b} \psi.
\]
It follows from \crefrange{equation: first part}{equation: second part} that \cref{Equation: Linear Isomorphism} is a well-defined bijective linear transform.
\end{proof}

\begin{lemma}
\label{lemma1: first order difference equation}
Let $\delta = (\delta(x))_{x \in \Z}$ be a bounded sequence of non-zero complex numbers, and let
\[
\Delta := \sum_{x =1}^\infty \left(\prod_{y=1}^{x} |\delta(-y)|^{-2}\right) + \sum_{x =1}^\infty \left(\prod_{y=0}^{x-1} |\delta(y)|^2\right).
\]
Then the following assertions hold true:
\begin{enumerate}[(i)]
\item We have
\begin{equation}
\dim \ker\left(L -  \delta \right) =
\begin{cases}
1, & \Delta < \infty, \\
0, & \Delta = \infty.
\end{cases}
\end{equation}

\item  Let
\begin{align}
\delta^{\downarrow} &:= \min\left\{\liminf_{x \to \infty} \left(\prod_{y=1}^{x} |\delta(-y)|^{-2}\right)^{1/x}, \qquad \liminf_{x \to \infty} \left(\prod_{y=0}^{x-1} |\delta(y)|^2\right)^{1/x} \right\}, \\
\delta^{\uparrow} &:= \max\left\{\limsup_{x \to \infty} \left(\prod_{y=1}^{x} |\delta(-y)|^{-2}\right)^{1/x}, \qquad \limsup_{x \to \infty} \left(\prod_{y=0}^{x-1} |\delta(y)|^2\right)^{1/x} \right\}.
\end{align}
If $0 < \delta^{\downarrow} \leq \delta^{\uparrow} < 1,$ then $\dim \ker\left(L -  \delta \right) = 1.$ Moreover, in this case, for any $\epsilon > 0$ satisfying $0 < \delta^{\downarrow} - \epsilon  \leq \delta^{\uparrow} + \epsilon < 1,$ there exists $x_\epsilon \in \N,$ such that for any $\psi \in \ker\left(L -  \delta \right)$ we have
\begin{equation}
\label{equation1: exponential decay}
|\psi(0)|^2(\delta^{\downarrow} - \epsilon)^{|x|} \leq |\psi(x)|^2 \leq |\psi(0)|^2(\delta^{\uparrow} + \epsilon)^{|x|}, \qquad |x| \geq x_\epsilon.
\end{equation}
\end{enumerate}
\end{lemma}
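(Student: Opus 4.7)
The plan is to reduce $L\psi = \delta\psi$ to a scalar first-order recurrence and solve it by iteration. Since $(L\psi)(x) = \psi(x+1)$, the eigenvalue equation reads $\psi(x+1) = \delta(x)\psi(x)$ for all $x \in \Z$. Each $\delta(x)$ is non-zero, so for any prescribed value $\psi(0) \in \C$ the recurrence has a unique formal solution: forward iteration gives $\psi(x) = \bigl(\prod_{y=0}^{x-1}\delta(y)\bigr)\psi(0)$ for $x \geq 1$, and backward iteration gives $\psi(-x) = \bigl(\prod_{y=1}^{x}\delta(-y)^{-1}\bigr)\psi(0)$ for $x \geq 1$. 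Thus $\psi \mapsto \psi(0)$ is a linear bijection from the space of formal solutions of $(L - \delta)\psi = 0$ onto $\C$.

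For part (i), I would square the moduli of the above formulas and sum over $\Z$ to obtain $\|\psi\|^2 = |\psi(0)|^2(1 + \Delta)$. Hence $\psi \in \ell^2(\Z)$ if and only if $\psi(0) = 0$ or $\Delta < \infty$, which immediately gives the dichotomy $\dim\ker(L - \delta) \in \{0,1\}$ with value $1$ precisely when $\Delta < \infty$.

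For part (ii), the hypothesis $0 < \delta^{\downarrow} \leq \delta^{\uparrow} < 1$ makes both sums defining $\Delta$ convergent by the root test, so $\Delta < \infty$ and part (i) gives $\dim \ker(L - \delta) = 1$. The pointwise bounds \cref{equation1: exponential decay} then follow directly from the definitions of $\delta^{\downarrow}, \delta^{\uparrow}$ as liminf/limsup of $x$-th roots of the forward and backward partial products: for $\epsilon > 0$ with $0 < \delta^{\downarrow} - \epsilon$ and $\delta^{\uparrow} + \epsilon < 1$, one picks $x_\epsilon \in \N$ beyond which
\[
(\delta^{\downarrow} - \epsilon)^x \leq \prod_{y=0}^{x-1}|\delta(y)|^2 \leq (\delta^{\uparrow} + \epsilon)^x,
\]
together with the analogous chain for $\prod_{y=1}^{x}|\delta(-y)|^{-2}$; substituting into the closed-form expression for $|\psi(x)|^2/|\psi(0)|^2$ from part (i) yields \cref{equation1: exponential decay}. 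I foresee no real obstacle beyond bookkeeping: the whole argument is elementary, amounting to solving a scalar linear difference equation with non-vanishing coefficients and then reading off $\ell^2$-membership and decay rates directly from the defining sums.
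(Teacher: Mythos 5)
Your proposal is correct and follows essentially the same route as the paper: reduce $(L-\delta)\psi=0$ to the scalar recurrence $\psi(x+1)=\delta(x)\psi(x)$, solve by iteration to get $\|\psi\|^2=|\psi(0)|^2(1+\Delta)$ for part (i), and for part (ii) apply the root test and choose $x_\epsilon$ so that the $x$-th roots of the forward and backward partial products are sandwiched between $\delta^{\downarrow}-\epsilon$ and $\delta^{\uparrow}+\epsilon$. No gaps.
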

Note that (i) shows that $\dim \ker\left(L -  \delta \right)$ depends only on $|\delta|.$ As for (ii), if $0 < \delta^{\downarrow} \leq \delta^{\uparrow} < 1,$ then  \cref{equation1: exponential decay} can be rewritten as
\[
|\psi(0)|^2e^{\log(\delta^{\downarrow} - \epsilon)|x|} \leq |\psi(x)|^2 \leq |\psi(0)|^2 e^{\log(\delta^{\uparrow} + \epsilon) |x|}, \qquad |x| \geq x_\epsilon,
\]
where $\epsilon > 0$ is any number satisfying $0 < \delta^{\downarrow} - \epsilon  \leq \delta^{\uparrow} + \epsilon < 1.$
\begin{proof}

(i) We need to solve a difference equation of the form;
\begin{equation}
\label{equation: difference equation}
\psi(x+1) = \delta(x)\psi(x), \qquad \forall x \in \Z.
\end{equation}
Since each $\delta(x)$ is non-zero, such a solution is uniquely determined by the initial value $\psi(0).$ In particular, if $\psi, \psi' \in \ker\left(L -  \delta \right)$ are non-zero vectors, then $\psi(0),\psi'(0)$ are non-zero, and so the linear combination $\psi'(0)\psi - \psi(0)\psi'$ is the zero vector. It follows that $\psi,\psi'$ are linearly independent, and so $\dim \ker\left(L -  \delta \right) \leq 1.$ Suppose that we have a bounded sequence $\psi = (\psi(x))_{x \in \Z}$ satisfying \cref{equation: difference equation}. We have 
\begin{equation}
\label{equation: product formula for the solution}
\psi(x) = \prod_{y=0}^{x-1} \delta(y) \psi(0), \qquad 
\psi(-x) = \prod_{y=1}^{x} \delta(-y)^{-1} \psi(0), \qquad 
x \geq 1.
\end{equation}
Since $\sum_{x \in \Z} |\psi(x)|^2 =  |\psi(0)|^2 + \sum_{x \in \N} |\psi(-x)|^2  + \sum_{x \in \N} |\psi(x)|^2,$ we get
\[
\sum_{x \in \Z} |\psi(x)|^2  
= |\psi(0)|^2 + |\psi(0)|^2 \sum_{x \in \N} \prod_{y=1}^{x}|\delta(-y)|^{-2}    + |\psi(0)|^2 \sum_{x \in \N} \prod_{y=0}^{x-1} |\delta(y)|^2.
\]
That is, $\dim \ker(L - \delta) = 1$ if and only if $\Delta := \sum_{x =1}^\infty \left(\prod_{y=1}^{x} |\delta(-y)|^{-2}\right) + \sum_{x =1}^\infty \left(\prod_{y=0}^{x-1} |\delta(y)|^2\right) < \infty.$ 

(ii) If $0 < \delta^{\downarrow} \leq \delta^{\uparrow} < 1,$ then $\dim \ker\left(L -  \delta \right) = 1$ by the root test. Let $\epsilon > 0$ be any number satisfying $0 < \delta^{\downarrow} - \epsilon  \leq \delta^{\uparrow} + \epsilon < 1.$ It follows that there exists $x_\epsilon \in \N,$ such that 
\begin{align}
\delta^{\downarrow} - \epsilon &< \min\left\{\inf_{x \geq x_\epsilon} \left(\prod_{y=1}^{x} |\delta(-y)|^{-2}\right)^{1/x}, \inf_{x \geq x_\epsilon} \left(\prod_{y=0}^{x-1} |\delta(y)|^2\right)^{1/x} \right\}, \\
\delta^{\uparrow} + \epsilon &> \max\left\{\sup_{x \geq x_\epsilon} \left(\prod_{y=1}^{x} |\delta(-y)|^{-2}\right)^{1/x}, \sup_{x \geq x_\epsilon} \left(\prod_{y=0}^{x-1} |\delta(y)|^2\right)^{1/x} \right\}.
\end{align}
Let $\psi \in \ker(L - \delta),$ and let $|x| \geq x_\epsilon.$ On one hand, if $x \geq x_\epsilon,$ then $|\psi(x)|^2 = \prod_{y=0}^{x-1} |\delta(y)|^2 |\psi(0)|^2,$ and so
\[
(\delta^{\downarrow} - \epsilon)^x |\psi(0)|^2 < |\psi(x)|^2 < (\delta^{\uparrow} + \epsilon)^x |\psi(0)|^2.
\]
On the other hand, if $-x \geq x_\epsilon,$ then $|\psi(x)|^2 = \prod_{y=1}^{-x} |\delta(-y)|^{-2} |\psi(0)|^2,$ and so
\[
(\delta^{\downarrow} - \epsilon)^{-x} |\psi(0)|^2 < |\psi(x)|^2 < (\delta^{\uparrow} + \epsilon)^{-x} |\psi(0)|^2.
\]
The claim follows.
\end{proof}

\begin{proof}[Proof of \cref{theorem: bulk-boundary correspondence}]
It follows from \cref{theorem: suzuki formula}(i) that
\[
\ker(U \mp 1) = \bigoplus_{j=1,2} \ker(\varGamma + (-1)^j) \cap \ker(\varGamma' \mp (-1)^{j+1}).
\]
The linear isomorphisms of the form \cref{Equation: Linear Isomorphism} allow us to let
\[
m_{j,\pm} := \dim \left(\ker(\varGamma + (-1)^j) \cap \ker(\varGamma' \mp (-1)^{j+1}) \right) = \dim \ker(L - \delta_{j,\pm}), \qquad j=1,2.
\]
Let $\delta := \delta_{j,\pm},$ and let $\Delta := \Delta_{j,\pm}.$ It then follows from \cref{lemma1: first order difference equation} that 
\begin{equation}
m_{j,\pm}  =
\begin{cases}
1, & \Delta_{j, \pm} < \infty, \\
0, & \mbox{ otherwise.}
\end{cases}
\end{equation}
As in \cref{theorem: suzuki formula}(i), we obtain the following formulas:
\begin{align}
&\ind_\pm(\varGamma,U) = m_{1,\pm} - m_{2,\pm},\\
\label{equation: dimension of eigenspace} &\dim \ker(U \mp 1) = m_{1,\pm} + m_{2,\pm},
\end{align}
where each $m_{j,\pm}$ is either $0$ or $1.$ Assume the contrary that $\Delta_{j, \pm} < \infty$ for each $j=1,2.$ In this case, for each $j=1,2,$ we have $\prod_{y=0}^{x-1} |\delta_{j, \pm}(y)|^2 \to 0$ as $x \to \infty.$ Therefore, $\prod_{y=0}^{x-1} |\delta_{1, \pm}(y)|^2 |\delta_{2, \pm}(y)|^2 \to 0$ as $x \to \infty.$ Note, however, that this is impossible, since for each $y = 0, \dots, x-1$ we have
\[
|\delta_{1, \pm}(y)|^2 |\delta_{2, \pm}(y)|^2 = \Lambda(p(y))^{-1}\Lambda(\mp a(y))^{-1}\Lambda(p(y))\Lambda(\mp a(y)) = 1.
\]
This contradiction shows $\Delta_{1, \pm} + \Delta_{2, \pm} = \infty.$ 

(i) If $\Delta_{1, \pm} = \Delta_{2, \pm} = \infty,$ then we get the trivial equalities $\ind_\pm(\varGamma,U) = 0 = \dim \ker(U \mp 1).$ On one hand, if $\Delta_{1, \pm} < \infty,$ then $\ind_\pm(\varGamma,U) = 1 - 0$ and $\dim \ker(U \mp 1) =  1 + 0.$ On the other hand, if $\Delta_{2, \pm} < \infty,$ then $\ind_\pm(\varGamma,U) = 0 - 1$ and $\dim \ker(U \mp 1) = 0 + 1.$ Thus, the formulas \crefrange{equation: bulk-edge correspondence for ssqw}{equation: pm index for ssqw} have been verified.

(ii) It is obvious that \cref{Equation2: Linear Isomorphism} defines a linear isomorphism.

(iii) If $0 < \delta^{\downarrow}_{j,\pm} \leq \delta^{\uparrow}_{j,\pm} < 1$ for some $j=1,2,$ then $\Delta_{j, \pm} < \infty$ by the root test. Let $\epsilon > 0$ be any number satisfying $0 < \delta^{\downarrow}_{j,\pm} - \epsilon  < \delta^{\uparrow}_{j,\pm} + \epsilon < 1.$ It follows from \cref{lemma1: first order difference equation}(ii) that there exists $x_\pm \in \N,$ such that for any non-zero $\psi \in \ker(L - \delta_{j,\pm}),$ we have
\[
(\delta^{\downarrow}_{j, \pm} - \epsilon)^{|x|} \leq 
\frac{|\psi(x)|^2}{|\psi(0)|^2} \leq (\delta^{\uparrow}_{j, \pm} + \epsilon)^{|x|}, \qquad |x| \geq x_\epsilon.
\]
Let $\Psi := \tau_{j,\pm}(\psi)$ be defined by \cref{Equation2: Linear Isomorphism}. With \cref{Equation2: Linear Isomorphism} in mind, we have
$
\|\Psi(x)\|^2 
= (\Lambda(\mp (-1)^j a(x)) + 1 )|\psi(x)|^2.
$ for each $x \in \Z.$ The claim follows.
\end{proof}

\subsection{The anisotropic case}

\begin{proof}[Proof of \cref{theorem: baby bulk-edge correspondence}]
Let $U = \Usuz$ be the evolution operator of Suzuki's split-step quantum walk given by \cref{equation: suzuki split-step quantum walk}, and let us assume the existence of the two-sided limits of the form \cref{equation: anisotropic assumption}. Let $|p(x)| < 1$ and $|a(x)| < 1$ for each $x \in \Z.$ It is shown in \cite[Theorem B(ii)]{Tanaka-2020} that 
\begin{align}
\label{equation: essential spectrum of U}
\ess(U) &= \bigcup_{\star = \pm \infty} \left\{z \in \T \mid \Re z \in I(\star) \right\},  \\
\label{equation: definition of Istar}
I(\star) &:= [p(\star) a(\star) - \sqrt{1 - p(\star)^2}\sqrt{1 - a(\star)^2}, p(\star) a(\star) + \sqrt{1 - p(\star)^2}\sqrt{1 - a(\star)^2}], \qquad \star = \pm \infty,
\end{align}
where $\T$ is the unit-circle in the complex plane. Since $p(\star), a(\star) \in (-1, 1),$ we can uniquely write $p(\star) = \sin \theta(\star)$ and $a(\star) = \sin \phi(\star)$ for some $\theta(\star), \phi(\star) \in (- \pi/2,\pi/2).$ We get
\[
p(\star) a(\star) \pm \sqrt{1 - p(\star)^2}\sqrt{1 - a(\star)^2} 
= \sin \theta(\star) \sin \phi(\star) \pm \cos \theta(\star) \cos \phi(\star)
= \pm \cos(\theta(\star) \mp \phi(\star)),
\]
where $- \pi < \theta(\star) \mp \phi(\star) < \pi.$ Thus $\pm 1 \in I(\star)$ if and only if $1 = \cos(\theta(\star) \mp \phi(\star))$ if and only if $\theta(\star) \mp \phi(\star) = 0.$ Since $(- \pi/2,\pi/2) \ni x \longmapsto \sin x \in (-1,1)$ is a bijective odd function, the last equality is equivalent to $p(\star) = \pm a(\star).$ That is, $p(\star) \neq  \pm a(\star)$ for each $\star  = \pm \infty$ if and only if $\pm 1 \notin \ess(U).$ From here on, we assume $\pm 1 \notin \ess(U),$ and prove \cref{theorem: baby bulk-edge correspondence}(i),(ii).

(i) Since $p(\star), a(\star) \in (-1,1),$ it follows from the continuity of $\Lambda$ that $\lim_{x \to \star} \Lambda(p(x)) \Lambda(\mp a(x)) =  \Lambda(p(\star)) \Lambda(\mp a(\star)).$ That is,
\[
\lim_{x \to \star} \Lambda(p(x)) \Lambda(\mp a(x)) \lessgtr 1 \mbox{ if and only if }  \iff p(\star) \mp a(\star) \lessgtr  0,
\]
where $\lessgtr$ simultaneously denotes the two binary relations $>$ and $<.$ For each $j=1,2,$ we get
\begin{align*}
\Delta_{j, \pm} 
&= \sum_{x =1}^\infty \left(\prod_{y=0}^{x-1} (\Lambda(p(-y)) \Lambda(\mp a(-y)))^{(-1)^{j+1}}\right) + \sum_{x =1}^\infty \left(\prod_{y=0}^{x-1} (\Lambda(p(y)) \Lambda(\mp a(y)))^{(-1)^j}\right).
\end{align*}
It follows from the ratio test that 
\[
\Delta_{j, \pm} < \infty \mbox{ if and only if } (-1)^j(p(+\infty) \mp a(+\infty)) < 0 < (-1)^j(p(-\infty) \mp a(-\infty)), \qquad j=1,2.
\]
It is now easy to see that \cref{equation: pm index for ssqw} becomes \cref{equation: pm indices for anisotropic ssqw}.

(ii) We have
\begin{align}
&\delta^\downarrow_{j, \pm} = \min \left\{ (\Lambda(p(- \infty)) \Lambda(\mp a(- \infty)))^{(-1)^{j+1}}, (\Lambda(p(+\infty)) \Lambda(\mp a(+\infty)))^{(-1)^j}\right\}, \\
&\delta^\uparrow_{j, \pm} = \max \left\{ (\Lambda(p(- \infty)) \Lambda(\mp a(- \infty)))^{(-1)^{j+1}}, (\Lambda(p(+\infty)) \Lambda(\mp a(+\infty)))^{(-1)^j}\right\}.
\end{align}
It follows from \cref{theorem: bulk-boundary correspondence}(iii) that for each $j=1,2,$ we have $0 < \delta^{\downarrow}_{j,\pm} \leq \delta^{\uparrow}_{j,\pm} < 1$ if and only if $\Delta_{j, \pm} < \infty.$ Moreover, in this case, for any $\epsilon > 0$ satisfying $0 < \delta^{\downarrow}_{j,\pm} - \epsilon  < \delta^{\uparrow}_{j,\pm} + \epsilon < 1,$ there exists $x_\pm \in \N$ with the property that if $\psi \in \ker(L - \delta_{j,\pm})$ is a non-zero vector, then $\Psi := \tau_{j,\pm}(\psi)$ given by \cref{Equation2: Linear Isomorphism} exhibits the following exponential decay;
\begin{equation}
\Lambda_{j, \pm}^\downarrow \left(\delta^{\downarrow} _{j, \pm}- \epsilon \right)^{|x|} \leq \frac{\|\Psi(x)\|^2}{|\psi(0)|^2} \leq  \Lambda_{j, \pm}^\uparrow \left(\delta^{\uparrow}_{j, \pm} + \epsilon\right)^{|x|}, \qquad |x| \geq x_\pm.
\end{equation}
We obtain \cref{equation: exponential decay in the anisotropic case}, if we let 
\begin{align}
&\kappa^\downarrow_{j, \pm} := |\psi(0)|^2\Lambda^\downarrow_{j, \pm}, \\
&\kappa^\uparrow_{j, \pm} := |\psi(0)|^2\Lambda^\uparrow_{j, \pm}, \\
&c^\downarrow_{j, \pm} := -\log\left(\delta^{\downarrow}_{j, \pm} - \epsilon\right), \\
&c^\uparrow_{j, \pm} := -\log\left(\delta^{\uparrow}_{j, \pm} + \epsilon\right).
\end{align}

\end{proof}

\section{Discussion and Concluding Remarks}
\label{section: concluding remarks}

\subsection{Kitagawa's split-step quantum walk}

We show that \cref{equation: kitagawa split-step quantum walk} can be made unitarily equivalent to \cref{equation: suzuki split-step quantum walk}, provided that we appropriately define $p,q,a,b$ in terms of $\theta_1, \theta_2.$

\begin{lemma}
\label{lemma: equivalence of Ukit}
Let 
\[
p := \sin \theta_2(\cdot + 1), \qquad q := \cos \theta_2(\cdot + 1), \qquad 
a := -\sin \theta_1, \qquad b := \cos \theta_1.
\]
Then
\[
\begin{pmatrix}
0 & 1 \\
1 & 0 
\end{pmatrix} 
U_{\textnormal{kit}} 
\begin{pmatrix}
0 & 1 \\
1 & 0 
\end{pmatrix} 
= 
\begin{pmatrix}
p & q L \\
L^*q^*  & -p(\cdot - 1)
\end{pmatrix}
\begin{pmatrix}
a & b^* \\
b & -a
\end{pmatrix}.
\]
\end{lemma}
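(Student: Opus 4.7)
The plan is to verify the claimed identity by direct computation, exploiting the fact that conjugation by the swap matrix $W := \begin{pmatrix} 0 & 1 \\ 1 & 0 \end{pmatrix}$ is an involution that sends any block matrix $\begin{pmatrix} A & B \\ C & D \end{pmatrix}$ to $\begin{pmatrix} D & C \\ B & A \end{pmatrix}$. Since $W^2 = 1$, we may insert $W W$ between every pair of factors in $U_{\textnormal{kit}}$, so that $W U_{\textnormal{kit}} W$ is the product of the four individually conjugated factors. Explicitly,
\[
W U_{\textnormal{kit}} W =
\begin{pmatrix} L & 0 \\ 0 & 1 \end{pmatrix}
\begin{pmatrix} \cos\theta_2 & \sin\theta_2 \\ -\sin\theta_2 & \cos\theta_2 \end{pmatrix}
\begin{pmatrix} 1 & 0 \\ 0 & L^* \end{pmatrix}
\begin{pmatrix} \cos\theta_1 & \sin\theta_1 \\ -\sin\theta_1 & \cos\theta_1 \end{pmatrix}.
\]

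Next, I would multiply out this product of four block matrices and simplify each of the four resulting entries using the two basic identities $L \cdot f = f(\cdot + 1) \cdot L$ (valid for any bounded multiplication operator $f$ on $\ell^2(\Z)$) and $LL^* = L^*L = 1$. For instance, the $(1,1)$-entry becomes $L \cos\theta_2 \cos\theta_1 - L\sin\theta_2 L^* \sin\theta_1 = \cos\theta_2(\cdot+1)\cos\theta_1(\cdot+1) L - \sin\theta_2(\cdot+1)\sin\theta_1$, and the remaining three entries are handled analogously. In each entry, an internal factor of either $L$, $L^*$, or $1$ arises, together with two multiplication operators whose arguments are either $\theta_1(\cdot)$ (unshifted) or $\theta_2(\cdot + 1)$ (shifted by one, because every $\theta_2$-factor must be pushed past an $L$).

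In parallel, I would expand the right-hand side $\varGamma \begin{pmatrix} a & b^* \\ b & -a \end{pmatrix}$ using the proposed identifications $p = \sin\theta_2(\cdot+1)$, $q = \cos\theta_2(\cdot+1)$, $a = -\sin\theta_1$, $b = \cos\theta_1$. Observing that $b$ and $q$ are $\R$-valued, so $b^* = b$ and $q^* = q$, one gets the block matrix
\[
\begin{pmatrix} pa + qLb & pb - qLa \\ L^* q a - p(\cdot-1) b & L^* q b + p(\cdot-1) a \end{pmatrix},
\]
and the same commutation rule $L \cdot b = b(\cdot+1) L$ brings the upper entries into the same normal form produced on the left-hand side. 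The lower entries require writing $L^* \cdot f = f(\cdot - 1) \cdot L^*$, which accounts precisely for the $p(\cdot - 1)$ appearing in $\varGamma$. A term-by-term comparison then closes the proof.

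The only real obstacle is bookkeeping: one must track consistently which factor of $\theta_1$ or $\theta_2$ is shifted, and in which direction, after every commutation of $L$ or $L^*$ past a multiplication operator. Once the shift conventions in the definition of $\varGamma$ are matched with those produced by pushing $L$ rightward (respectively $L^*$ leftward) through the sine/cosine factors, each of the four entries agrees immediately, with no trigonometric identities or further structural arguments required.
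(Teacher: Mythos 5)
Your proof is correct and is essentially the same direct computation as the paper's: both hinge on conjugating by the swap matrix and on the commutation relations $Lf = f(\cdot+1)L$ and $L^*f = f(\cdot-1)L^*$ (together with $q^*=q$, $b^*=b$). The only difference is organizational: the paper groups $\sigma_1 U_{\textnormal{kit}}\sigma_1$ as $\bigl[\sigma_1(1\oplus L)R(\theta_2)(L^*\oplus 1)\bigr]\bigl[R(\theta_1)\sigma_1\bigr]$, identifying each bracket with $\varGamma$ and the coin directly, whereas you expand all four entries on both sides and compare — more bookkeeping, same argument.
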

\begin{proof}
Let $\sigma_1$ be the first Pauli matrix. Given any $\R$-valued sequence $\theta = (\theta(x))_{x \in \Z},$ we consider the following rotation matrix;
\[
R(\theta) := 
\begin{pmatrix}
\cos \theta & - \sin \theta \\
\sin \theta & \cos \theta  
\end{pmatrix}.
\]
It is obvious that $R(\theta)R(\phi) = R(\theta + \phi)$ for any $\R$-valued sequences $\theta, \phi.$ If we let $\varGamma := \sigma_1 (1 \oplus L) R(\theta_2) (L^* \oplus 1)$ and $\varGamma' := R(\theta_1)\sigma_1,$ then
\[
\sigma_1 U_{\textnormal{kit}} \sigma_1 
= \varGamma \varGamma'.
\]
We have
\[
R(\theta_j) \sigma_1
=
\begin{pmatrix}
\cos \theta_j & - \sin \theta_j \\
\sin \theta_j & \cos \theta_j
\end{pmatrix}
\begin{pmatrix}
0 & 1 \\
1 & 0 
\end{pmatrix}
=
\begin{pmatrix}
 - \sin \theta_j &  \cos \theta_j\\
\cos \theta_j  &  \sin \theta_j
\end{pmatrix}.
\]
Now 
\[
\varGamma = \sigma_1 (1 \oplus L) R(\theta_2) (L^* \oplus 1) = 
\begin{pmatrix}
p & q L \\
L^*q^*  & -p(\cdot - 1)
\end{pmatrix}.
\]
\end{proof}

\subsection{Spectral gaps and decay rates}
We shall impose the assumption of \cref{theorem: baby bulk-edge correspondence} throughout this subsection. Recall that any non-zero vector$ \Psi := \tau_{j,\pm}(\psi)$ given by \cref{Equation2: Linear Isomorphism} exhibits the following exponential decay;
\begin{equation}
\Lambda_{j, \pm}^\downarrow \left(\delta^{\downarrow} _{j, \pm}- \epsilon \right)^{|x|} \leq \frac{\|\Psi(x)\|^2}{|\psi(0)|^2} \leq  \Lambda_{j, \pm}^\uparrow \left(\delta^{\uparrow}_{j, \pm} + \epsilon\right)^{|x|}, \qquad |x| \geq x_\pm,
\end{equation}
where 
\begin{align*}
&\delta^\downarrow_{j, \pm} = \min \left\{ (\Lambda(p(- \infty)) \Lambda(\mp a(- \infty)))^{(-1)^{j+1}}, (\Lambda(p(+\infty)) \Lambda(\mp a(+\infty)))^{(-1)^j}\right\}, \\
&\delta^\uparrow_{j, \pm} = \max \left\{ (\Lambda(p(- \infty)) \Lambda(\mp a(- \infty)))^{(-1)^{j+1}}, (\Lambda(p(+\infty)) \Lambda(\mp a(+\infty)))^{(-1)^j}\right\}.
\end{align*}

\subsubsection{Half-gapped examples}

\begin{example}[half-gapped case]
Let $0 < p_0 < 1,$ and let
\begin{align*}
p(-\infty) &:= -p_0, & a(- \infty) &:= \pm p_0, \\
p(+\infty) &:= p_0, & a(+ \infty) &:= \mp p_0. 
\end{align*}
Since $\pm a(- \infty) = p_0 \neq p(- \infty)$ and $\pm a(+ \infty) = - p_0 \neq p(+ \infty),$ the essential spectrum of the operator $U$ has a spectral gap at $\pm 1.$ Moreover,
\[
p(-\infty) \mp a(-\infty)   = -2p_0 < 0 < 2p_0 =    p(+\infty) \mp a(+\infty).
\]
We get $\ind_\pm(\varGamma, U) = \dim \ker(U \mp 1) = 1.$ It follows from \crefrange{equation: essential spectrum of U}{equation: definition of Istar} that the essential spectrum of $U$ is given explicitly by
\begin{align*}
\ess(U) =  \left\{z \in \T \mid \Re z \in I_\pm \right\},  \qquad 
I_\pm := [\mp p_0^2 - (1 - p_0^2), \mp p_0^2 + (1 - p_0^2)].
\end{align*}
More precisely, the essential spectrum $\ess(U)$ can be classified into the following two distinct cases: 
\begin{figure}[H]
\centering

\begin{tikzpicture}
\begin{axis}[ticks=none, xmin=-2, xmax=2, ymin= -2, ymax=2, legend pos = north west, axis lines=center, xlabel=$\Re$, ylabel=$\Im$, xlabel style={anchor = north}
, width = 0.4\textwidth, height = 0.4\textwidth, clip=false
]
	\addplot [domain=0:2*pi,samples=50, smooth, densely dashed]({cos(deg(x))},{sin(deg(x))}); 
	\addplot [domain= 0: 1, samples=50, smooth, white]({cos(deg(6*pi/10))*x + (1-x)},{0}); 
	\addplot [domain= 0: 1, samples=50, smooth, densely dashed]({cos(deg(6*pi/10))*x + (1-x)},{0}); 
	\addplot [domain= pi: 6*pi/10, samples=50, smooth, line width=2.5pt]({cos(deg(x))},{sin(deg(x))}); 
	\addplot [domain= -pi: -6*pi/10, samples=50, smooth, line width=2.5pt]({cos(deg(x))},{sin(deg(x))}); 
    \node at (0,2.5) {(i) $(a(-\infty), a(+\infty)) = (p_0, -p_0)$};
    \node at (0,-2.5) {$\ess(U) =  \left\{z \in \T \mid \Re z \in I_+ \right\}$};
    \node at (1.2, -0.2) {$1$};
    \node at (0.2, 1.2) {$i$};
\end{axis}
\end{tikzpicture}
\qquad \qquad 
\begin{tikzpicture}
\begin{axis}[ticks=none, xmin=-2, xmax=2, ymin= -2, ymax=2, legend pos = north west, axis lines=center, xlabel=$\Re$, ylabel=$\Im$, xlabel style={anchor = north}
, width = 0.4\textwidth, height = 0.4\textwidth, clip=false
]
	\addplot [domain=0:2*pi,samples=50, smooth,  densely dashed]({cos(deg(x))},{sin(deg(x))}); 
	\addplot [domain= 0: 1, samples=50, smooth, white]({cos(deg(4*pi/10))*x - (1-x)},{0}); 
	\addplot [domain= 0: 1, samples=50, smooth,  densely dashed]({cos(deg(4*pi/10))*x - (1-x)},{0}); 
	\addplot [domain= -4*pi/10: 0, samples=50, smooth, line width=2.5pt]({cos(deg(x))},{sin(deg(x))}); 
	\addplot [domain= 0: 4*pi/10, samples=50, smooth, line width=2.5pt]({cos(deg(x))},{sin(deg(x))}); 
    \node at (0,2.5) {(ii) $(a(-\infty), a(+\infty)) = (-p_0, p_0)$};
    \node at (0,-2.5) {$\ess(U) =  \left\{z \in \T \mid \Re z \in I_- \right\}$};
    \node at (1.2, -0.2) {$1$};
    \node at (0.2, 1.2) {$i$};
\end{axis}
\end{tikzpicture}
\caption{
The black connected regions depict $\ess(U).$ We have $-1 \in \ess(U)$ in Case (i), whereas $+1 \in \ess(U)$ in Case (ii).
}
\label{figure: two spectra}
\end{figure}
We have $I_+ = [-1, 1 - 2p_0^2]$ in Case (i), and $I_- = [-1 + 2p_0^2, +1]$ in Case (ii). This motivates us to introduce the gap width $\Omega(p_0) := 2p_0^2.$  Moreover, we have
\[
\delta^\uparrow_{1, \pm} = \delta^\downarrow_{1, \pm} = \Lambda(- p_0)^2.
\] 
Note that $\Omega(p_0) = 2p_0^2$ increases as $p_0 \to 1.$ In this case, the convergence rates $\delta^\uparrow_{1, \pm} = \delta^\downarrow_{1, \pm}$ decrease, since $\Lambda(-p_0)^2 \to 0.$
\end{example}

\subsubsection{Double-gapped examples}

We let $p(x) = 0$ for each $x \in \Z.$ We have the following index formula;
\begin{equation}
\label{equation: indices for the doubly gapped case}
\ind_\pm(\varGamma,U) = 
\begin{cases}
+1, & \mp a(-\infty) < 0 < \mp a(+\infty), \\
-1, & \mp a(+\infty) < 0 <  \mp a(-\infty), \\
0, & \mbox{otherwise}.
\end{cases}
\end{equation}
Note that we have
\begin{align}
\delta^{\downarrow}_{j, \pm} &= \min\left\{\Lambda(\mp (-1)^j a(+\infty))), \frac{1}{\Lambda(\mp (-1)^j a(-\infty)))}\right\}, \\
\delta^{\uparrow}_{j, \pm} &= \max\left\{\Lambda(\mp (-1)^j a(+\infty))), \frac{1}{\Lambda(\mp (-1)^j a(-\infty)))}\right\}, 
\end{align}
It follows from \crefrange{equation: essential spectrum of U}{equation: definition of Istar} that
\begin{align}
\ess(U) &= \bigcup_{\star = \pm \infty} \left\{z \in \T \mid \Re z \in I(\star) \right\},  \\
I(\star) &:= [- \sqrt{1 - a(\star)^2}, \sqrt{1 - a(\star)^2}], \qquad \star = \pm \infty.
\end{align}
\begin{figure}[H]
\centering
\begin{tikzpicture}
\begin{axis}[ticks=none, xmin=-2, xmax=2, ymin= -2, ymax=2, legend pos = north west, axis lines=center, xlabel=$\Re$, ylabel=$\Im$, xlabel style={anchor = north}
, width = 0.4\textwidth, height = 0.4\textwidth, clip=false
]
	\addplot [domain=0:2*pi,samples=50, smooth, densely dashed]({cos(deg(x))},{sin(deg(x))}); 
	\addplot [domain= 0: 1, samples=50, smooth, white]({cos(deg(3*pi/10))*x + (1-x)},{0}); 
	\addplot [domain= 0: 1, samples=50, smooth, densely dashed]({cos(deg(3*pi/10))*x + (1-x)},{0}); 
	\addplot [domain= 0: 1, samples=50, smooth, white]({cos(deg(7*pi/10))*x -(1-x)},{0}); 
	\addplot [domain= 0: 1, samples=50, smooth, densely dashed]({cos(deg(7*pi/10))*x -(1-x)},{0}); 	
	\addplot [domain= 3*pi/10: 7*pi/10, samples=50, smooth, line width=2.5pt]({cos(deg(x))},{sin(deg(x))}); 
	\addplot [domain= -3*pi/10: -7*pi/10, samples=50, smooth, line width=2.5pt]({cos(deg(x))},{sin(deg(x))}); 
    \node at (1.2, -0.2) {$1$};
    \node at (0.2, 1.2) {$i$};
\end{axis}
\end{tikzpicture}
\caption{
This figure depicts $\ess(U).$ 
}
\label{figure: one spectrum}
\end{figure}
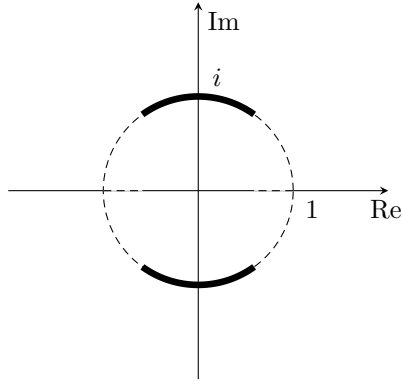
\cref{figure: one spectrum} motivates us to introduce the following gap width; 
\[
\Omega := \min \{1 - \sqrt{1 - a(-\infty)^2}, 1 - \sqrt{1 - a(+\infty)^2}\}.
\]
Suppose that $\mp a(-\infty) < 0 < \mp a(+\infty)$ holds true. Then $\Lambda(\mp a(-\infty)) < 1$ and $\frac{1}{\Lambda(\mp a(+\infty))} < 1.$ Note that $\Omega$ increases as $|a(\star)| \to 1$ for each $\star = \pm \infty.$ This corresponds to $\mp a(-\infty) \to -1$ and $\mp a(+\infty) \to +1.$ Thus, $\Lambda(\mp a(-\infty)) \to 0$ and $\frac{1}{\Lambda(\mp a(+\infty))} \to 0.$ It follows that $\delta^{\uparrow}_{2, \pm} \to 0$ and $\delta^{\downarrow}_{2, \pm} \to 0.$

\subsection{The square of the evolution operator}

\cref{theorem: bulk-boundary correspondence} gives a concrete quantum walk example with the property that the estimate \cref{equation: topological protection of bounded states} becomes an equality. The purpose of the current subsection is to show that this is not always the case. Our counter example is based on the following simple proposition. 

\begin{proposition}
If $(\varGamma, U)$ is an abstract chiral pair on a Hilbert space $\cH,$ then $(\varGamma, U^2)$ and $(\varGamma'\varGamma \varGamma', U^2)$ are unitarily equivalent chiral pairs. Moreover, the following assertions hold true:
\begin{enumerate}[(i)]
\item If $\ker(U^2 - 1) = \ker(U - 1) \oplus \ker(U + 1)$ is finite-dimensional, then
\begin{align}
\ind_+(\varGamma,U^2) &= \ind(\varGamma, U).
\end{align}

\item If $\ker(U^2 + 1) = \ker(U - i) \oplus \ker(U + i)$ is finite-dimensional, then
\begin{align}
\ind_-(\varGamma,U^2) &= 0.
\end{align}

\item If $\ker(U^2 - 1) \oplus \ker(U^2 + 1)$ is finite-dimensional, then $\ind(\varGamma, U^2) = \ind(\varGamma,U).$
\end{enumerate}
\end{proposition}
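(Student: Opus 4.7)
The plan is to separate the structural assertion (that the two prescribed pairs are unitarily equivalent chiral pairs) from the three index identities, and to reduce each index identity to the trace interpretation of $\ind_\pm$ given in the remark following \cref{theorem: suzuki formula}.

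First I would check that $(\varGamma, U^2)$ is a chiral pair by squaring $\varGamma U \varGamma = U^*.$ Next I would simplify
\[
\varGamma' \varGamma \varGamma' = (\varGamma U)\,\varGamma\,(\varGamma U) = \varGamma U^2,
\]
and observe that it is unitary self-adjoint. The shortest route is to note that chiral symmetry is equivalent to the reformulation $U \varGamma U = \varGamma$ (multiply $\varGamma U \varGamma = U^*$ on the left by $U$), whence $\varGamma U^k \varGamma = U^{-k}$ for every integer $k$; this gives $(\varGamma U^2)^2 = \varGamma U^2 \varGamma \cdot U^2 = U^{-2} U^2 = 1$ and $(\varGamma U^2)^* = U^{-2}\varGamma = \varGamma U^2.$ With $\varGamma U^2$ identified as a unitary self-adjoint operator, $(\varGamma U^2, U^2)$ is automatically a chiral pair. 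For the unitary equivalence I would take $\epsilon := U^*$ and use $U \varGamma U = \varGamma$ to compute $\epsilon^*(\varGamma U^2)\epsilon = U \varGamma U = \varGamma,$ while $\epsilon^* U^2 \epsilon = U^2$ is trivial, so \cref{lemma: unitary invariance of the indices} applies.

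For (i), the hypothesis $\dim \ker(U^2 - 1) < \infty$ and the trace formula in the remark following \cref{theorem: suzuki formula} give $\ind_+(\varGamma, U^2) = \Tr(\varGamma|_{\ker(U^2 - 1)}).$ The factorisation $U^2 - 1 = (U-1)(U+1)$ together with the identity $v = \tfrac{1}{2}(v + Uv) + \tfrac{1}{2}(v - Uv)$ produces the $\varGamma$-invariant orthogonal decomposition $\ker(U^2 - 1) = \ker(U - 1) \oplus \ker(U + 1),$ so the trace splits block-wise and a second application of the trace formula yields $\ind_+(\varGamma, U^2) = \ind_+(\varGamma, U) + \ind_-(\varGamma, U) = \ind(\varGamma, U)$ by \cref{Equation: Witten Index is Sum of pm Indies}. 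For (ii), the analogous decomposition at $\pm i$ gives $\ker(U^2 + 1) = \ker(U - i) \oplus \ker(U + i),$ but now chiral symmetry forces $\varGamma$ to \emph{swap} these summands: if $U v = i v,$ then $U \varGamma v = \varGamma U^* v = -i \varGamma v.$ Choosing a basis $v_1, \dots, v_n$ of $\ker(U - i),$ the vectors $\varGamma v_1, \dots, \varGamma v_n$ form a basis of $\ker(U + i),$ and the matrix of $\varGamma|_{\ker(U^2 + 1)}$ in the concatenated basis is block off-diagonal with identity blocks, whose trace is $0.$ Assertion (iii) is then immediate from (i), (ii), and \cref{Equation: Witten Index is Sum of pm Indies}.

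The main delicacy I foresee is locating the correct intertwiner $\epsilon$: the naive guess $\epsilon = U$ does not conjugate $\varGamma U^2$ back to $\varGamma,$ and only after rewriting chiral symmetry as $U \varGamma U = \varGamma$ does the choice $\epsilon = U^*$ present itself. Everything else is routine once the trace interpretation of the index and the two spectral decompositions of $\ker(U^2 \mp 1)$ are in place.
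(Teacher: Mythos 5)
Your argument is correct, but it reaches (i) and (ii) by a genuinely different route than the paper. The paper computes the standard representation of $U^2$ explicitly — its diagonal blocks are $2R_1^2-1$ and $2R_2^2-1$ — so that $\ind_\pm(\varGamma,U^2)=\dim\ker(2R_1^2-1\mp1)-\dim\ker(2R_2^2-1\mp1)$; part (i) then follows from $\ker(R_j^2-1)=\ker(R_j-1)\oplus\ker(R_j+1)$, and part (ii) from the unitary equivalence combined with the sign-flip identity \cref{equation: index formulas with different chiral symmetry conditions}, which forces $\ind_-(\varGamma,U^2)=-\ind_-(\varGamma,U^2)=0$. You instead work entirely at the level of the trace formula $\ind_\pm(\varGamma,U^2)=\Tr(\varGamma|_{\ker(U^2\mp1)})$ together with the spectral decompositions $\ker(U^2-1)=\ker(U-1)\oplus\ker(U+1)$ and $\ker(U^2+1)=\ker(U-i)\oplus\ker(U+i)$; for (ii) your observation that $\varGamma$ interchanges $\ker(U-i)$ and $\ker(U+i)$ makes the vanishing of the trace immediate and, notably, does not use the unitary equivalence at all. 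Your intertwiner $\epsilon=U^*$ (justified by $U\varGamma U=\varGamma$) is also a clean one-step implementation of the equivalence that the paper obtains by chaining a $\varGamma'$-conjugation with passage to adjoints. Both proofs are complete; yours is arguably more elementary and self-contained, while the paper's stays inside the block-matrix calculus set up in \cref{section: preliminaries} and reuses its earlier identities.
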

\begin{proof}
Note that $(\varGamma, U^2)$ and $(\varGamma'\varGamma \varGamma', U^2)$ are chiral pairs, since $U^2 = \varGamma (\varGamma'\varGamma \varGamma').$ We have
\begin{equation}
\label{equation: unitary invariance for the square of U}
(\varGamma, U^2) 
= (\varGamma, \varGamma\varGamma'\varGamma\varGamma') 
\cong (\varGamma' \varGamma \varGamma', \varGamma'(\varGamma\varGamma'\varGamma\varGamma')\varGamma')
= (\varGamma' \varGamma \varGamma', \varGamma' \varGamma\varGamma'\varGamma)
= (\varGamma' \varGamma \varGamma',(U^2)^*) 
\cong (\varGamma' \varGamma \varGamma', U^2),
\end{equation}
where $\cong$ represents unitary equivalence. If $U$ admits the standard representation of the form \cref{equation: standard representation of U}, then $U^2$ admits the following standard representation;
\begin{equation}
U^2 =
\begin{pmatrix}
2R_1^2-1 & 2i Q_2R_2 \\
2i Q_1R_1  & 2R_2^2-1 \\
\end{pmatrix}.
\end{equation}
It follows that
\begin{equation}
\label{equation: indices for the square of U}
\ind_\pm(\varGamma, U^2) := \dim \ker ((2R_1^2 - 1) \mp 1) - \dim \ker ((2R_2^2-1) \mp 1).
\end{equation}
(i) If $\ker(U^2 - 1) = \ker(U - 1) \oplus \ker(U + 1)$ is finite-dimensional, then $\ker(U - 1) = \ker(R - 1)$ and $\ker(U + 1) = \ker(R + 1)$ are finite-dimensional. In this case,
\begin{align*}
\ind_+(\varGamma, U^2)
&= \dim \ker((2R_1^2 - 1) - 1) - \dim \ker((2R_2^2 - 1) - 1) \\
&= \dim \ker(R_1^2 - 1) - \dim \ker(R_2^2 - 1) \\
&= \dim \ker(R_1 - 1) + \dim \ker(R_1 + 1) - (\dim \ker(R_2 - 1) + \dim \ker(R_2 - 1))\\
&= \dim \ker(R_1 - 1) - \dim \ker(R_2 - 1) + \dim \ker(R_1 + 1) - \dim \ker(R_2 - 1)\\
&= \ind_+(\varGamma,U) + \ind_-(\varGamma,U) \\
&= \ind(\varGamma,U).
\end{align*}

(ii) If $\ker(U^2 + 1) = \ker(U - i) \oplus \ker(U + i)$ is finite-dimensional, then it follows from \cref{equation: unitary invariance for the square of U} that
\[
\ind_-(\varGamma, U^2) = \ind_-(\varGamma' \varGamma \varGamma', U^2) = -\ind_-(\varGamma, U^2),
\]
where the last equality follows from \cref{equation: index formulas with different chiral symmetry conditions}. We get $\ind_-(\varGamma, U^2) = 0.$

(iii) This follows from (i) and (ii).
\end{proof}

\begin{remark}
If $\ker(U^2 + 1) = \ker(U - i) \oplus \ker(U + i)$ is finite-dimensional, then $\ind_-(\varGamma, U^2) = 0.$ It follows from \cref{equation: indices for the square of U} that $\ker R_1$ and $\ker R_2$ have the same finite dimension, say, $n.$ We get $\dim(\ker(U^2 + 1)) = 2 n.$
\end{remark}

\begin{example}
\label{proposition: counter example}
Let $U$ be the evolution operator of Suzuki's split-step quantum walk, and let $p(x) = 0$ for each $x \in \Z.$ Suppose that there exists $a_0 \in (0,1)$ with the property that $a(x) \to \pm a_0$ as $x \to \pm \infty.$ We make use of the index formula \cref{equation: indices for the doubly gapped case}. Since $a(-\infty) = -a_0 < 0 < a_0 = a(+\infty),$ we have $\ind_+(\varGamma, U) = -1$ and $\ind_-(\varGamma, U) = 1.$ Thus $\ind_+(\varGamma, U^2) = 0.$ On the other hand, we get
\[
\dim \ker(U^2 - 1) 
= \dim(\ker(U - 1) \oplus \ker(U + 1)) 
= |\ind_+(\varGamma, U)| + |\ind_-(\varGamma, U)|
=2.
\]
Then $\dim \ker(U^2 - 1) = 2$ and $\ind_+(\varGamma, U^2) = 0.$ That is, $|\ind_+(\varGamma, U^2)| \neq \dim \ker(U^2 - 1).$
\end{example}

\subsection{A new derivation of the existing index formulas}

The purpose of the current subsection is to give an alternative derivation of the following existing index formulas by making use of \cref{equation: pm indices for anisotropic ssqw};
\begin{theorem}[{\cite[Theorem B]{Tanaka-2020}}]
Under the assumption of \cref{theorem: baby bulk-edge correspondence}, let $-1, +1 \notin \ess(U).$ Then
\begin{align}
\label{equation1: witten index for ssqw}
\ind (\varGamma, U) &=
\begin{cases}
0, &  |p(-\infty)| < |a(-\infty)| \mbox{ and }  |p(+\infty)| < |a(+\infty)|, \\
+\sign p(+\infty), &  |p(-\infty)| < |a(-\infty)| \mbox{ and }  |p(+\infty)| > |a(+\infty)|, \\
-  \sign p(-\infty), &  |p(-\infty)| > |a(-\infty)| \mbox{ and } |p(+\infty)| < |a(+\infty)|, \\
+\sign p(+\infty) - \sign p(-\infty), &  |p(-\infty)| >  |a(-\infty)| \mbox{ and }  |p(+\infty)| > |a(+\infty)|,
\end{cases} \\
\label{equation2: witten index for ssqw}
\ind (\varGamma', U) &=
\begin{cases}
-\sign a(+\infty) + \sign a(-\infty), &  |p(-\infty)| < |a(-\infty)| \mbox{ and }  |p(+\infty)| < |a(+\infty)|, \\
+ \sign a(-\infty), &  |p(-\infty)| < |a(-\infty)| \mbox{ and }  |p(+\infty)| > |a(+\infty)|, \\
- \sign a(+\infty), &  |p(-\infty)| > |a(-\infty)| \mbox{ and }  |p(+\infty)| < |a(+\infty)|, \\
0, &  |p(-\infty)| >  |a(-\infty)| \mbox{ and }  |p(+\infty)| > |a(+\infty)|,
\end{cases}
\end{align}
where $\sign$ denotes the sign function $\sign : \R \to \{-1,1\}.$ We let $\sign 0 := 1$ by convention.
\end{theorem}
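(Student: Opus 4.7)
The plan is to derive both formulas as direct corollaries of \cref{theorem: baby bulk-edge correspondence}(i), using the identities $\ind(\varGamma, U) = \ind_+(\varGamma, U) + \ind_-(\varGamma, U)$ and $\ind(\varGamma', U) = \ind_+(\varGamma, U) - \ind_-(\varGamma, U)$ established earlier (the first from the lemma on chiral pairs, the second from \cref{theorem: suzuki formula}(iii)). Writing $P_\star := p(\star)$ and $A_\star := a(\star)$ for $\star = \pm\infty$, formula \cref{equation: pm indices for anisotropic ssqw} shows that each of $\ind_\pm(\varGamma, U)$ is determined entirely by the pair of signs $(\sign(P_- \mp A_-),\,\sign(P_+ \mp A_+))$. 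Thus the problem reduces to a finite case analysis.

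First I would do the sign bookkeeping. The hypothesis $\pm 1 \notin \ess(U)$ forces $P_\star \neq \pm A_\star$, so $|P_\star| \neq |A_\star|$ at each end. When $|P_\star| < |A_\star|$ one has $A_\star \neq 0$, and the signs of $P_\star + A_\star$ and $P_\star - A_\star$ coincide with $\sign A_\star$ and $-\sign A_\star$, respectively. When $|P_\star| > |A_\star|$ one has $P_\star \neq 0$, and both $P_\star \pm A_\star$ inherit the sign of $P_\star$. This observation also shows that whenever $\sign P_\pm$ or $\sign A_\pm$ appears on the right-hand side of \cref{equation1: witten index for ssqw}--\cref{equation2: witten index for ssqw}, its argument is genuinely nonzero in the relevant case, so the convention $\sign 0 := 1$ is never actually invoked.

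Then I would cycle through the four cases. In Case~I ($|P_\star| < |A_\star|$ at both ends), a $2 \times 2$ table in $(\sign A_-, \sign A_+)$ yields $\ind_-(\varGamma,U) = -\ind_+(\varGamma,U)$, so $\ind(\varGamma,U) = 0$ and $\ind(\varGamma',U) = 2\ind_+(\varGamma,U) = \sign A_- - \sign A_+$. In Case~IV ($|P_\star| > |A_\star|$ at both ends), the analogous computation gives $\ind_+(\varGamma,U) = \ind_-(\varGamma,U) = \tfrac{1}{2}(\sign P_+ - \sign P_-)$, whence $\ind(\varGamma',U) = 0$ and $\ind(\varGamma,U) = \sign P_+ - \sign P_-$. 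In the mixed Case~II, the sign at $-\infty$ is controlled by $A_-$ and the one at $+\infty$ is controlled by $P_+$; running through the four subcases of $(\sign A_-, \sign P_+)$ gives $\ind(\varGamma,U) = \sign P_+$ and $\ind(\varGamma',U) = \sign A_-$. Case~III is symmetric, with the roles of the two ends interchanged, and produces the sign flips in the stated formulas.

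The main obstacle is purely combinatorial bookkeeping across roughly twelve small sign computations; no genuine analytic content remains once the reduction to the signs of $P_\star$ and $A_\star$ is in place. The only subtlety is verifying, as part of the first step, that the strict inequalities in \cref{equation: pm indices for anisotropic ssqw} are compatible with the sign conventions used in \cref{equation1: witten index for ssqw}--\cref{equation2: witten index for ssqw}, which is guaranteed exactly by the spectral-gap hypothesis $\pm 1 \notin \ess(U)$.
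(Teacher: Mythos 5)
Your proposal is correct and takes essentially the same route as the paper: both reduce everything to the formula for $\ind_\pm(\varGamma,U)$ in \cref{theorem: baby bulk-edge correspondence}(i), combine it with the identities $\ind(\varGamma,U)=\ind_+(\varGamma,U)+\ind_-(\varGamma,U)$ and $\ind(\varGamma',U)=\ind_+(\varGamma,U)-\ind_-(\varGamma,U)$ from \cref{theorem: suzuki formula}, and finish by a finite sign case analysis. The only difference is organisational — the paper tabulates all $16$ sign combinations in a single table, while you group them into four blocks via the observation that $\sign(P_\star\pm A_\star)$ is controlled by $\sign A_\star$ when $|P_\star|<|A_\star|$ and by $\sign P_\star$ otherwise — and your sub-case computations all check out.
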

\begin{proof}
Let $i_+$ (resp. $i_-$) be defined by the right hand side of \cref{equation1: witten index for ssqw} (resp. of \cref{equation2: witten index for ssqw}). We are required to show $i_+ = \ind(\varGamma,U)$ and $i_- = \ind(\varGamma',U).$ Note first that \cref{equation: pm indices for anisotropic ssqw} can be rewritten as 
\[
\ind_\pm(\varGamma,U) = 
\begin{cases}
0,  & p(-\infty) <  \pm a(-\infty) \mbox{ and } p(+\infty) <  \pm a(+\infty), \\
+1, & p(-\infty) <  \pm a(-\infty) \mbox{ and } p(+\infty) >  \pm a(+\infty), \\
-1, & p(-\infty) >  \pm a(-\infty) \mbox{ and } p(+\infty) < \pm a(+\infty), \\
0, & p(-\infty) >  \pm a(-\infty) \mbox{ and } p(+\infty) >  \pm a(+\infty).
\end{cases}
\]
Note also that we have $i_\pm = \ind_+(\varGamma,U) \pm \ind_-(\varGamma,U)$ in each of the $16$ cases defined by \cref{table: classification}. The claim follows from \cref{theorem: suzuki formula}(iii).
\end{proof}

{
\footnotesize
\begin{table}[H]
\begin{center}
\begin{tabular}{|c|c|c|c|c|c|c|c| } 
\hline
Cases &\multicolumn{1}{|c|}{$-\infty$} & \multicolumn{1}{|c|}{$+\infty$} & $\ind_+(\varGamma,U)$ & $\ind_-(\varGamma,U)$ & $i_+$ & $i_-$ \\ \hline
Case 1& $|p(-\infty)| < +a(-\infty)$  & $|p(+\infty)| < +a(+\infty)$   & $ 0$  & $ 0$  & $ 0$ & $ 0$  \\ 
Case 2& $|p(-\infty)| < +a(-\infty)$  & $|p(+\infty)| < -a(+\infty)$  & $+1$  & $-1$  & $ 0$ & $+2$ \\ 
Case 3&$|p(-\infty)| < +a(-\infty)$  & $+p(+\infty) > |a(+\infty)|$   & $+1$  & $ 0$  & $+1$ & $+1$  \\ 
Case 4&$|p(-\infty)| < +a(-\infty)$  & $-p(+\infty) > |a(+\infty)|$  & $ 0$  & $-1$  & $-1$ & $+1$  \\ 
Case 5&$|p(-\infty)| < -a(-\infty)$ & $|p(+\infty)| < +a(+\infty)$   & $ -1$ & $+1$  & $ 0$ & $-2$  \\ 
Case 6&$|p(-\infty)| < -a(-\infty)$ & $|p(+\infty)| < -a(+\infty)$  & $ 0$  & $ 0$  & $ 0$ & $ 0$  \\ 
Case 7&$|p(-\infty)| < -a(-\infty)$ & $+p(+\infty) > |a(+\infty)|$   & $ 0$  & $+1$  & $+1$ & $-1$  \\ 
Case 8&$|p(-\infty)| < -a(-\infty)$ & $-p(+\infty) > |a(+\infty)|$  & $-1$  & $ 0$  & $-1$ & $-1$ \\ 
Case 9&$+p(-\infty) > |a(-\infty)|$  & $|p(+\infty)| < +a(+\infty)$   & $-1$  & $ 0$  & $-1$ & $-1$  \\
Case 10&$+p(-\infty) > |a(-\infty)|$  & $|p(+\infty)| < -a(+\infty)$  & $ 0$  & $-1$  & $-1$ & $+1$ \\ 
Case 11&$+p(-\infty) > |a(-\infty)|$  & $+p(+\infty) > |a(+\infty)|$   & $ 0$  & $ 0$  & $0 $ & $ 0$  \\ 
Case 12&$+p(-\infty) > |a(-\infty)|$  & $-p(+\infty) > |a(+\infty)|$  & $-1$  & $-1$  & $-2$ & $ 0$  \\ 
Case 13&$-p(-\infty) > |a(-\infty)|$ & $|p(+\infty)| < +a(+\infty)$   & $ 0$  & $+1$  & $+1$ & $-1$  \\ 
Case 14&$-p(-\infty) > |a(-\infty)|$ & $|p(+\infty)| < -a(+\infty)$  & $+1$  & $ 0$  & $+1$ & $+1$  \\ 
Case 15&$-p(-\infty) > |a(-\infty)|$ & $+p(+\infty) > |a(+\infty)|$   & $+1$  & $+1$  & $+2$ & $ 0$ \\ 
Case 16&$-p(-\infty) > |a(-\infty)|$ & $-p(+\infty) > |a(+\infty)|$  & $ 0$  & $ 0$  & $ 0$ & $ 0$  \\ \hline
\end{tabular}
\caption{Classification of the indices}
\label{table: classification}
\end{center}
\end{table}
}

\section*{Acknowledgements}

The authors would like to thank the members of the Shinshu Mathematical Physics Group for extremely useful comments and discussions. Our sincere thanks also go to D.~Funakawa, K.~Saito, and K.~Wada for giving us their valuable feedback. Y.T. acknowledges support by JSPS KAKENHI Grant Number 20J22684. This work was also partially supported by the Research Institute for Mathematical Sciences, an International Joint Usage/Research Center located in Kyoto University.

\bibliographystyle{alpha}
\bibliography{Bibliography}

\end{document}